\newcommand{\mat}[1]{\ensuremath{\mathbf{#1}}}
\newtheorem{theorem}{Theorem}
\newtheorem{lemma}{Lemma}
\newtheorem{definition}{Definition}
\newtheorem{corollary}{Corollary}
\begin{document}



\IEEEoverridecommandlockouts
\title{Achievable Rates and Upper bounds for the Interference Relay Channel}
\author{
\IEEEauthorblockN{Anas Chaaban and Aydin Sezgin}
\thanks{%
The authors are with the Emmy-Noether Research Group on Wireless Networks, 
Institute of Telecommunications and Applied Information Theory, 
Ulm University, 89081, Ulm, Germany,
Email: {anas.chaaban@uni-ulm.de, aydin.sezgin@uni-ulm.de}.
This work is supported by the German Research Foundation, Deutsche
Forschungsgemeinschaft (DFG), Germany, under grant SE 1697/3.%
}
}

\maketitle


\begin{abstract}
The two user Gaussian interference channel with a full-duplex relay is studied. By using genie aided approaches, two new upper bounds on the achievable sum-rate in this setup are derived. These upper bounds are shown to be tighter than previously known bounds under some conditions. Moreover, a transmit strategy for this setup is proposed. This strategy utilizes the following elements: Block Markov encoding combined with a Han-Kobayashi scheme at the sources, decode and forward at the relay, and Willems' backward decoding at the receivers. This scheme is shown to achieve within a finite gap our upper bounds in certain cases. 
\end{abstract}

\section{Introduction}

Relaying is an important strategy used to improve the performance in wireless networks. It can be used to overcome coverage problems, and furthermore, the use of relays can increase the achievable rate in a network. This fact can be seen in \cite{CoverElgamal} where the capacity of the relay channel consisting of a source, a relay, and a destination was studied, and it was shown that higher rates are achievable compared to the classical point to point channel.

By including one more transmit-receive pair to the point to point channel, we face an inevitable phenomenon in wireless networks, that is interference. This setup is known as the interference channel (IC) and has been the topic of intensive study for decades \cite{Carleial}.
Relaying can also be utilized as a means of cooperation in the IC, and the obtained setup is known as the interference channel with relay (IC-R). This setup has been studied in different variants: e.g. the IC with a full-duplex causal relay \cite{MaricDaboraGoldsmith,TianYener,SahinErkip}, and the IC with a cognitive relay \cite{SahinErkip_Cognitive,SridharanVishwanathJafarShamai}. In both variants, the impact of relaying on the system performance was analyzed, by studying upper bounds and achievable rate regions. However, same as for the IC, the capacity of the IC-R remains an open problem.

Several recent works study special cases of the IC with a full-duplex relay (IC-FDR), e.g. strong/weak source-relay links and strong interference. For instance, in \cite{TianYener} new upper bounds were developed for the IC with a potent relay, i.e. a relay that has no power constraint. Clearly, an IC with a potent relay provides an upper bound for the IC-FDR with a power constraint at the relay. The upper bounds given cover the case of weak interfering and source-relay links, and the case of strong interfering links. In \cite{MaricDaboraGoldsmith_GeneralizedRelaying}, an achievable scheme for the IC-FDR that uses block Markov encoding at the sources and decode and forward at the relay was proposed. In \cite{SahinErkip}, an achievable scheme similar to that in \cite{MaricDaboraGoldsmith_GeneralizedRelaying} was studied, with an additional component, that is rate splitting at the sources. The performance of this scheme is analyzed for the case when the source-relay links are strong, and thus, decode and forward at the relay does not limit the achievable rates. The IC-FDR with strong interference was studied in \cite{MaricDaboraGoldsmith}. A new upper bound was given, and this new bound was compared to an achievable rate in an IC-FDR with strong interference.

In this paper, we study the IC-FDR and establish two new upper bounds on the achievable sum-rate in this setup, based on genie aided approaches. One of our bounds is tighter than the cut-set bounds and the upper bound in \cite{MaricDaboraGoldsmith} at moderate to high power. Moreover, compared  to these bounds that require optimization, our bound is computable in closed form. The second upper bound we provide is relevant for the IC-FDR with weak interference.

We also provide an achievable scheme, that is a simplified version of the scheme in \cite{SahinErkip}. This scheme combines super-position block Markov encoding and Rate Splitting at the sources, decode and forward at the relay, and Willems' backward decoding at the destinations \cite{Willems}. If the IC-FDR has strong source-relay links, we show that this scheme achieves rates within a finite gap to the given upper bounds. In this case, the rate gain obtained by using the relay can be clearly seen from the expressions of the achievable rates. Moreover, we show that regardless of the strength of the source-relay links, when the relay-destination links are weak then using a Han-Kobayashi scheme \cite{HanKobayashi} as described in \cite{EtkinTseWang} already achieves rates within a constant gap to the developed upper bounds, 

Throughout the paper, we will use the following notations. We use $x^n$ to denote the sequence $(x_1,\dots,x_n)$. We denote $\frac{1}{2}\log(1+x)$ by $C(x)$. For $\alpha\in[0,1]$, $\bar{\alpha}=1-\alpha$.

\section{Model}
We consider a Gaussian IC-FDR as shown in Figure \ref{2UserICFDR}. Transmitter $i$ needs to communicate a message $m_i$ uniformly distributed over $\mathcal{M}_i\triangleq\{1,\dots,2^{nR_i}\}$ to its respective receiver. Each transmitter encodes its message to an $n$-symbol codeword $X_i^n$, and transmits this codeword. At time instant $k$, the input output equations of this setup are given by
\begin{eqnarray*}
y_1(k)&=&h_{11}x_1(k)+h_{21}x_2(k)+h_{r1}x_r(k)+z_1(k)\nonumber\\
y_2(k)&=&h_{22}x_2(k)+h_{12}x_1(k)+h_{r2}x_r(k)+z_2(k)
\end{eqnarray*}
\begin{eqnarray*}
y_r(k)&=&h_{1r}x_1(k)+h_{2r}x_2(k)+z_r(k).
\end{eqnarray*}
The coefficient $h_{ij}\geq0$ represents the channel gain from transmitter $i$ to receiver $j$, $i,j\in\{1,2\}$. The channels to and from the relay are denoted by $h_{ir}$ and $h_{ri}$ respectively. $x_r(k)$ is the transmit signal at the relay at time instant $k$. The relay is causal, which means that $x_r(k)$ is only a function of the previous observations of $X_1$ and $X_2$ at the relay, i.e. 
\begin{equation}
\label{Causality}
x_r(k)=f_r(y_r^{k-1}).
\end{equation}
The source and relay signals must satisfy $\mathbb{E}[X_j^2]\leq P$, $j\in\{1,2,r\}$. The noise at the receivers and the relay is assumed to be of zero-mean and unit-variance $z_1,z_2,z_r\in\mathcal{N}(0,1)$.

\begin{figure}[ht]
\centering{
\includegraphics[width=0.7\columnwidth]{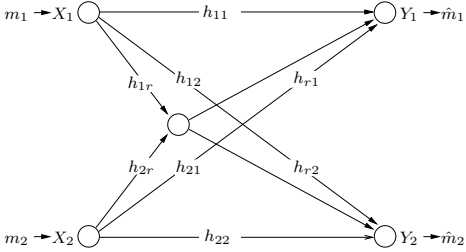}
}
\caption{The 2 user IC-FDR}
\label{2UserICFDR}
\end{figure}

Receiver $i$ decodes $\hat{m}_i=g_i(Y_i^n)$. The whole procedure defines a code denoted $(2^{nR_1},2^{nR_2},n)$. An error occurs if $\hat{m}_i\neq m_i$, and the average probability of error $P_e=P\left(\hat{m}_i\neq m_i: i\in\{1,2\}\right)$. 

A rate pair $(R_1,R_2)$ is said to be achievable if there exists a sequence of $(2^{nR_1},2^{nR_2},n)$ codes such that $P_e\to0$ as $n\to\infty$, and the capacity region $\mathcal{C}$ of the IC-FDR is the closure of the set of these achievable rate pairs.

\section{Known Upper Bounds}

The cut-set bound \cite{CoverThomas} is given by the following lemma.
\begin{lemma}
\label{Lemma:CutSet}
$\mathcal{C}\subset\mathcal{R}_{CS}\triangleq\bigcup_{\mat{A}\succeq0}\mathcal{R}_{cs}(\mat{A}),$ where $\mathcal{R}_{cs}(\mat{A})$ denotes the set of rate pairs $(R_1,R_2)$ that satisfy
\begin{align}
\label{Cut-Set-1}
R_1&\leq\min\{I(X_1,X_r;Y_1|X_2),I(X_1;Y_1,Y_r|X_2,X_r)\}\\
\label{Cut-Set-2}
R_2&\leq\min\{I(X_2,X_r;Y_2|X_1),I(X_2;Y_2,Y_r|X_1,X_r)\}\\
\label{Cut-Set-3}
R_1&+R_2\leq\min\{I(X_1,X_2,X_r;Y_1,Y_2),\nonumber\\
&\hspace{3.3cm}I(X_1,X_2;Y_1,Y_2,Y_r|X_r)\},
\end{align}
where $(X_1,X_2,X_r)$ are jointly Gaussian with covariance matrix
\begin{equation}
\label{CovMat}
\mat{A}=\left(\begin{array}{ccc}
P_1 	& 0 	& \rho_1\sqrt{P_1P_r}\\
0   	& P_2	& \rho_2\sqrt{P_2P_r}\\
\rho_1\sqrt{P_1P_r} & \rho_2\sqrt{P_2P_r} & P_r\end{array}\right),
\end{equation}
$\rho_1,\rho_2\in[-1,1]$ and $P_j\leq P\ \forall j\in\{1,2,r\}$.
\end{lemma}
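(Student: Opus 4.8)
The plan is to instantiate the general network cut-set bound of \cite{CoverThomas} for the five-node IC-FDR and then specialize the resulting single-letter region to Gaussian inputs. I start from an arbitrary sequence of $(2^{nR_1},2^{nR_2},n)$ codes with $P_e\to 0$. Fano's inequality gives $H(m_i\mid Y_i^n)\le n\epsilon_n$ with $\epsilon_n\to 0$, hence $nR_i\le I(m_i;Y_i^n)+n\epsilon_n$ for $i\in\{1,2\}$; and since $m_1$ and $m_2$ are independent, $I(m_i;Y_i^n)\le I(m_i;Y_i^n\mid m_j)$ for $j\ne i$ and $I(m_1;Y_1^n)+I(m_2;Y_2^n)\le I(m_1,m_2;Y_1^n,Y_2^n)$. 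These are the starting points for \eqref{Cut-Set-1}, \eqref{Cut-Set-2} and \eqref{Cut-Set-3}, respectively.

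Each bound is then single-letterized by choosing the cut that isolates the relevant transmitters from the relevant receiver(s). For the first term of \eqref{Cut-Set-1} I adjoin $X_r^n$ to $I(m_1;Y_1^n\mid m_2)$, write the result as $H(Y_1^n\mid m_2)-H(Y_1^n\mid m_1,m_2,X_r^n)$, use that $(m_1,m_2)$ determines $(X_1^n,X_2^n)$ and that, given all three input blocks, $Y_1^n$ equals the i.i.d.\ noise $Z_1^n$, and bound $H(Y_1^n\mid m_2)\le\sum_k H(Y_{1k}\mid X_{2k})$; this yields $\sum_k I(X_{1k},X_{rk};Y_{1k}\mid X_{2k})$. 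For the second term of \eqref{Cut-Set-1} I instead adjoin $Y_r^n$ and then $X_1^n$, and here the causal structure $x_r(k)=f_r(y_r^{k-1})$ is essential: conditioning on $Y_r^{k-1}$ makes $X_{rk}$ available, which is precisely what lets the chain-rule expansion collapse term by term to $\sum_k I(X_{1k};Y_{1k},Y_{rk}\mid X_{2k},X_{rk})$. Constraint \eqref{Cut-Set-2} is identical after exchanging the roles of the two users. For \eqref{Cut-Set-3}, the first term is treated as the first term of \eqref{Cut-Set-1} (adjoining $X_r^n$ to $I(m_1,m_2;Y_1^n,Y_2^n)$), giving $\sum_k I(X_{1k},X_{2k},X_{rk};Y_{1k},Y_{2k})$, and the second term as the second term of \eqref{Cut-Set-1} (adjoining $Y_r^n$ and using $x_r(k)=f_r(y_r^{k-1})$), giving $\sum_k I(X_{1k},X_{2k};Y_{1k},Y_{2k},Y_{rk}\mid X_{rk})$.

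To pass from these averaged sums to the region in the statement I introduce a time-sharing variable $Q$ uniform on $\{1,\dots,n\}$ and independent of everything else, which turns each sum into a single mutual information in the conditional law of $(X_{1Q},X_{2Q},X_{rQ})$ given $Q$. Two facts then produce the Gaussian region. First, since $X_1^n$ is a function of $m_1$ only and $X_2^n$ of $m_2$ only and the messages are independent, $X_{1Q}$ and $X_{2Q}$ are uncorrelated once the codewords are taken zero-mean without loss of generality, which forces the vanishing $(1,2)$ and $(2,1)$ entries of $\mat{A}$; the relay signal, on the other hand, may be correlated with both source signals, which is what the parameters $\rho_1,\rho_2$ encode. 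Second, each of the six functionals above is maximized, for a given second-moment matrix, by jointly Gaussian inputs (maximum-entropy principle) and is moreover concave in that matrix, so a Jensen argument over $Q$ replaces the conditional-on-$Q$ distribution by a single Gaussian triple whose covariance is the averaged one; that averaged matrix is positive semidefinite, has zero $(1,2)$ and $(2,1)$ entries, and satisfies $P_j\le P$ for $j\in\{1,2,r\}$, so it is of the form \eqref{CovMat}. Letting $n\to\infty$ removes $\epsilon_n$ and places $(R_1,R_2)$ in $\mathcal{R}_{cs}(\mat{A})\subseteq\mathcal{R}_{CS}$.

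The only genuinely delicate point I anticipate is the bookkeeping around the relay: $X_r^n$ is a deterministic function of $y_r^{k-1}$, but because $y_r^{k-1}$ carries the relay noise it is not a function of $(m_1,m_2)$ alone, so in every step the conditioning must be arranged so that $X_{rk}$ is revealed exactly when it is needed — through $Y_r^{k-1}$ in the $Y_r$-aided bounds and through $(m_1,m_2)$ in the others. The remaining ingredients — the entropy inequalities, the Gaussian maximum-entropy substitution, and the concavity/time-sharing collapse — are routine.
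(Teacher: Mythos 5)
Your proposal is correct and follows essentially the same route as the paper, which offers no explicit proof but simply invokes the network cut-set theorem of Cover and Thomas; your derivation is precisely the standard argument behind that citation, instantiated for the five-node IC-FDR (the four cuts, the Fano/chain-rule single-letterization with the causality constraint $x_r(k)=f_r(y_r^{k-1})$ handled via conditioning on $Y_r^{k-1}$, and the Gaussian specialization via maximum entropy plus concavity in the covariance matrix over the time-sharing variable). The points you flag as delicate --- the independence of $X_1$ and $X_2$ forcing the zero off-diagonal entries of $\mat{A}$, and revealing $X_{rk}$ through $Y_r^{k-1}$ rather than through the messages --- are handled correctly.
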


In the following, we use $\mat{A}$ to denote a covariance matrix that satisfies the conditions in (\ref{CovMat}) without explicitly mentioning them. According to the cut-set bound, the maximum achievable sum-rate is bounded as follows:
\begin{corollary}
$R_1+R_2\leq\bar{R}_{CS}\triangleq\max_{(R_1^*,R_2^*)\in\mathcal{R}_{CS}}R_1^*+R_2^*$\end{corollary}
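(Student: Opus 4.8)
The plan is to read the bound off Lemma~\ref{Lemma:CutSet} directly; the only point needing a word of justification is that the supremum defining $\bar R_{CS}$ is attained, so that writing $\max$ is legitimate.

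First, by Lemma~\ref{Lemma:CutSet} every achievable pair $(R_1,R_2)$ lies in $\mathcal C\subset\mathcal R_{CS}$, so $R_1+R_2\le\sup_{(R_1^*,R_2^*)\in\mathcal R_{CS}}(R_1^*+R_2^*)$. This supremum is finite, since on every $\mathcal R_{cs}(\mat A)$ the sum-rate constraint (\ref{Cut-Set-3}) gives $R_1+R_2\le I(X_1,X_2,X_r;Y_1,Y_2)$, which is the mutual information between jointly Gaussian vectors whose covariance is bounded in terms of $P$ and the (fixed) channel gains, hence at most some constant $\kappa(P)<\infty$ independent of $\mat A$. It remains to show the supremum is attained.

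For attainment I would invoke compactness of $\mathcal R_{CS}$. The admissible covariance matrices are the $\mat A$ of the form (\ref{CovMat}) with $0\le P_j\le P$, $\rho_1,\rho_2\in[-1,1]$ and $\mat A\succeq0$; this is a closed and bounded, hence compact, subset of the symmetric $3\times3$ matrices. Each mutual-information term on the right-hand sides of (\ref{Cut-Set-1})--(\ref{Cut-Set-3}) is a continuous function of $\mat A$ (continuity of the mutual information between jointly Gaussian vectors in their joint covariance is standard) and uniformly bounded by $\kappa(P)$, so $\mathcal R_{cs}(\mat A)$ is a nonempty compact polytope whose defining constants vary continuously with $\mat A$. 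Consequently $\mathcal R_{CS}=\bigcup_{\mat A}\mathcal R_{cs}(\mat A)$ is bounded, and it is also closed: given $(R_1^{(k)},R_2^{(k)})\in\mathcal R_{cs}(\mat A^{(k)})$ with $(R_1^{(k)},R_2^{(k)})\to(R_1^\star,R_2^\star)$, pass to a subsequence along which $\mat A^{(k)}\to\mat A^\star$ (which is admissible, by compactness of the parameter set); continuity of the bounds then forces $(R_1^\star,R_2^\star)$ to satisfy every constraint defining $\mathcal R_{cs}(\mat A^\star)$, so the limit lies in $\mathcal R_{CS}$. Hence $\mathcal R_{CS}$ is compact, the linear functional $(R_1,R_2)\mapsto R_1+R_2$ attains its maximum on it, and that maximum is by definition $\bar R_{CS}$. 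Together with the previous paragraph this yields $R_1+R_2\le\bar R_{CS}$ for every achievable pair.

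Thus the statement is essentially a restatement of Lemma~\ref{Lemma:CutSet} specialised to the sum-rate; the only place requiring any care is the closedness of the uncountable union $\mathcal R_{CS}$, handled above by the subsequence argument. If one preferred to sidestep even this, replacing $\max$ by $\sup$ would reduce the proof to a single line, with no loss in any later use of the bound.
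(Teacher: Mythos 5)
Your argument is correct and matches the paper, which states this as an immediate corollary of Lemma~\ref{Lemma:CutSet} with no further proof: containment $\mathcal{C}\subset\mathcal{R}_{CS}$ directly bounds the sum-rate by the maximal sum over $\mathcal{R}_{CS}$. The extra compactness/attainment discussion is sound but goes beyond what the paper bothers to justify.
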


The first term in the sum rate cut-set bound (\ref{Cut-Set-3}) was tightened in \cite{MaricDaboraGoldsmith}. This sum rate upper bound is given in the following lemma.
\begin{lemma}[\cite{MaricDaboraGoldsmith}]
\label{Maric}\begin{eqnarray*}
R_1+R_2\leq R_m(\mat{A})\triangleq\min_{\{d_i\}_{i=1}^5}I(X_1,X_2,X_r;Y_1,Y_{1g})
\end{eqnarray*}
where $Y_{1g}=d_1X_1+d_2X_2+d_5X_r+d_3Z_1+d_4\tilde{Z}_1$, $(X_1,X_2,X_r)$ are jointly Gaussian with covariance matrix $\mat{A}$, $\tilde{Z}_1\sim\mathcal{N}(0,1)$ independent of all other variables, and $d_i$, $i\in\{1,\dots,5\}$ satisfy
\begin{align*}
&(1/h_{21}+v(d_3-d_2/h_{21}))^2+(v d_4)^2\leq1,\\
&d_5=(h_{r2}-u h_{r1})/v,\quad u=(1-v d_2)/h_{21},
\end{align*}
for some $u,v\in\mathbb{R}$, $v\neq0$.
\end{lemma}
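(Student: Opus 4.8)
The plan is a genie-aided converse: we give receiver~1 the extra observation $Y_{1g}^n$ and argue that this enhanced receiver can recover \emph{both} messages, after which the sum rate is controlled by a single-letter mutual-information expression. I would start from Fano's inequality, $n(R_1+R_2)\le I(m_1;Y_1^n)+I(m_2;Y_2^n)+n\epsilon_n$ with $\epsilon_n\to0$. Since $m_1$ and $m_2$ are independent, the two easy bounds $I(m_1;Y_1^n)\le I(m_1;Y_1^n,Y_{1g}^n)$ and $I(m_2;Y_2^n)\le I(m_2;Y_2^n\mid m_1)$ hold, so it suffices to control $I(m_1;Y_1^n,Y_{1g}^n)+I(m_2;Y_2^n\mid m_1)$.

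The crux is the inequality $I(m_2;Y_2^n\mid m_1)\le I(m_2;Y_1^n,Y_{1g}^n\mid m_1)$, i.e.\ that, conditioned on $m_1$, the output $Y_2^n$ is a stochastically degraded version of $(Y_1^n,Y_{1g}^n)$; this is exactly where the constraints on the $d_i$ enter. Condition on $(m_1,m_2)$ and on the relay noise $Z_r^n$, so that $X_1^n,X_2^n,X_r^n$ are all fixed and the three outputs differ only in their additive noises, with $X_r^n$ appearing \emph{identically} in each. Subtracting the (known, given $m_1$) $X_1^n$-contributions and forming the linear combination $u\bigl(Y_1^n-h_{11}X_1^n\bigr)+v\bigl(Y_{1g}^n-d_1X_1^n\bigr)$, the choices $u=(1-vd_2)/h_{21}$ and $d_5=(h_{r2}-uh_{r1})/v$ are precisely what aligns the coefficients of $X_2^n$ and $X_r^n$ with those in the residual $Y_2^n-h_{12}X_1^n$, while the leftover noise is $(u+vd_3)Z_1^n+vd_4\tilde Z_1^n$, whose per-symbol variance equals $(1/h_{21}+v(d_3-d_2/h_{21}))^2+(vd_4)^2\le1$ by the first constraint. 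Adding a fresh i.i.d.\ Gaussian of the complementary variance then reproduces exactly the conditional law of $Y_2^n$; averaging over $Z_r^n$ shows $Y_2^n$ can be simulated from $(Y_1^n,Y_{1g}^n,m_1)$ and noise independent of $m_2$, so the data-processing inequality gives the claim.

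Combining the pieces yields $n(R_1+R_2)-n\epsilon_n\le I(m_1;Y_1^n,Y_{1g}^n)+I(m_2;Y_1^n,Y_{1g}^n\mid m_1)=I(m_1,m_2;Y_1^n,Y_{1g}^n)\le I(X_1^n,X_2^n,X_r^n;Y_1^n,Y_{1g}^n)$, the last step because $(Y_1^n,Y_{1g}^n)$ is a deterministic function of the inputs and the noises. To single-letterize I would write this as $h(Y_1^n,Y_{1g}^n)-h(Y_1^n,Y_{1g}^n\mid X_1^n,X_2^n,X_r^n)$, bound $h(Y_1^n,Y_{1g}^n)\le\sum_k h(Y_1(k),Y_{1g}(k))$, observe that the conditional entropy is $n\,h(Z_1,d_3Z_1+d_4\tilde Z_1)$ since given all inputs the outputs are memoryless, then replace each per-symbol pair by a jointly Gaussian one of the same covariance (Gaussian maximizes differential entropy) and use concavity of $\log\det$ with Jensen and the power constraints $\tfrac1n\sum_k\mathbb{E}[X_j(k)^2]\le P$ to collapse to a single covariance matrix of the form $\mat{A}$ — the block-diagonal structure $X_1\perp X_2$ being automatic because $X_1^n$ depends only on $m_1$ and $X_2^n$ only on $m_2$. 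Since the whole argument is valid for every admissible $(d_i,u,v)$, we may minimize, obtaining $R_1+R_2\le R_m(\mat{A})$.

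I expect the main obstacle to be the degradedness step of the second paragraph: although $X_r^n$ is a complicated causal function of \emph{both} messages and of the relay noise $Z_r^n$, the key point is that it enters $Y_1$, $Y_{1g}$ and $Y_2$ with the same realization, so the linear combination can cancel it against the known gains; the $d_i$-constraints are exactly the algebraic conditions under which this coefficient matching is possible while the residual noise stays no larger than $Z_2$. The remaining steps — in particular exchanging the per-symbol maximization with the time averaging via concavity — are routine.
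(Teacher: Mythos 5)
The paper offers no proof of this lemma---it is quoted directly from \cite{MaricDaboraGoldsmith}---and your reconstruction follows essentially the same genie-aided route as that reference (show that $(Y_1^n,Y_{1g}^n)$ lets receiver~1 simulate $Y_2^n$ and hence decode both messages, then single-letterize with the Gaussian maximum-entropy and $\log\det$-concavity arguments), and it is correct. The one caveat worth noting is that the coefficient matching you describe gives the combination $u(Y_1-h_{11}X_1)+v(Y_{1g}-d_1X_1)$ an $X_2$-coefficient of $uh_{21}+vd_2=1$ rather than $h_{22}$, so the constraints as stated implicitly assume the normalization $h_{22}=1$ used in the cited reference; this is a transcription artifact of the lemma, not a flaw in your argument.
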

Define the region $\mathcal{R}_m(\mat{A})\triangleq\{(R_1,R_2):R_1+R_2\leq R_m(\mat{A})\}$. The following corollaries are immediate conclusions from Lemma \ref{Maric}.
\begin{corollary}
$R_1+R_2\leq \bar{R}_M\triangleq\max_{\mat{A}\succeq0}R_m(\mat{A})$.
\end{corollary}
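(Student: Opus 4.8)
\medskip
\noindent\emph{Proof proposal.} My plan is to obtain this corollary as essentially a one-line consequence of Lemma~\ref{Maric}. By that lemma, every achievable pair $(R_1,R_2)$ satisfies $R_1+R_2\le R_m(\mat{A})$ for the covariance matrix $\mat{A}$ that arises in the converse, namely the covariance of the power-constrained inputs, which one may take to be jointly Gaussian when bounding the mutual-information terms. Such an $\mat{A}$ is automatically admissible, i.e.\ $\mat{A}\succeq0$ and $P_j\le P$ for $j\in\{1,2,r\}$, and hence lies in the feasible set over which $\bar{R}_M$ is defined. Therefore I would simply conclude $R_1+R_2\le R_m(\mat{A})\le\max_{\mat{A}\succeq0}R_m(\mat{A})=\bar{R}_M$.

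\medskip
\noindent The only point I would spell out is that the maximum in the definition of $\bar{R}_M$ is actually attained, so that $\bar{R}_M$ is a well-defined quantity. For this I would note that the admissible set $\{\mat{A}\succeq0:\ P_j\le P,\ j\in\{1,2,r\}\}$ is compact, since by (\ref{CovMat}) it is the continuous image of the compact box $\{(P_1,P_2,P_r,\rho_1,\rho_2):\ 0\le P_j\le P,\ |\rho_i|\le1\}$; that for each fixed admissible genie choice $\{d_i\}_{i=1}^5$ the map $\mat{A}\mapsto I(X_1,X_2,X_r;Y_1,Y_{1g})$ is continuous; and that $R_m(\mat{A})$, being the pointwise infimum over $\{d_i\}_{i=1}^5$ of these continuous functions, is upper semicontinuous and hence attains its maximum on a compact set.

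\medskip
\noindent I do not anticipate a genuine obstacle, since Lemma~\ref{Maric} (quoted from \cite{MaricDaboraGoldsmith}) already performs the substantive work — the genie-aided converse and the Gaussianization step — leaving nothing of conceptual difficulty. The residual bookkeeping is only to confirm that the genie-parameter constraints of Lemma~\ref{Maric} can be met for every admissible $\mat{A}$ (so that $R_m(\mat{A})<\infty$ and the inner minimization is meaningful), which holds thanks to the freedom in the choice of $u$ and $v$.
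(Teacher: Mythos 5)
Your proposal is correct and matches the paper, which states this corollary as an immediate consequence of Lemma~\ref{Maric} without further proof: one passes from the bound $R_1+R_2\le R_m(\mat{A})$ for the particular admissible $\mat{A}$ arising in the converse to the maximum over all admissible $\mat{A}$. Your additional remarks on compactness and upper semicontinuity are sound but go beyond what the paper records.
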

\begin{corollary}
$\mathcal{C}\subset\mathcal{R}_M\triangleq\bigcup_{\mat{A}\succeq0}\mathcal{R}_{cs}(\mat{A})\cap\mathcal{R}_m(\mat{A})$.
\end{corollary}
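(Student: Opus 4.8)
The plan is to observe that both Lemma~\ref{Lemma:CutSet} and Lemma~\ref{Maric} are converse bounds which, when specialized to a given code, are \emph{simultaneously} valid for one and the same covariance matrix $\mat{A}$; the corollary then follows by intersecting the two regions for each fixed $\mat{A}$ and taking the union over $\mat{A}$.

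First I would fix an achievable pair $(R_1,R_2)$ together with a sequence of $(2^{nR_1},2^{nR_2},n)$ codes with $P_e\to 0$. Running the standard Fano-plus-single-letterization argument underlying Lemma~\ref{Lemma:CutSet}, one arrives at the cut-set constraints (\ref{Cut-Set-1})--(\ref{Cut-Set-3}) evaluated for a time-sharing auxiliary variable and the input distribution induced by the code. The relevant structural facts are: $X_1^n$ and $X_2^n$ are independent, being functions of independent messages, so the $(1,2)$ entry of the time-averaged input covariance vanishes; the per-symbol power constraints force the diagonal entries to be at most $P$; and Cauchy--Schwarz keeps the source--relay correlations in $[-1,1]$. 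Hence the time-averaged input covariance has exactly the form (\ref{CovMat}). Replacing the induced inputs by jointly Gaussian inputs with this same matrix $\mat{A}$ can only increase each conditional mutual information in (\ref{Cut-Set-1})--(\ref{Cut-Set-3}), since Gaussian maximizes differential entropy under a covariance constraint and the noises are Gaussian; therefore $(R_1,R_2)\in\mathcal{R}_{cs}(\mat{A})$ for this particular $\mat{A}$.

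Next I would note that the genie-aided derivation of Lemma~\ref{Maric} from \cite{MaricDaboraGoldsmith}, applied to the same code, single-letterizes in the same way and is maximized by the jointly Gaussian input with the \emph{same} time-averaged covariance matrix $\mat{A}$; consequently $R_1+R_2\le R_m(\mat{A})$, i.e.\ $(R_1,R_2)\in\mathcal{R}_m(\mat{A})$ for the very same $\mat{A}$. Combining the two inclusions, $(R_1,R_2)\in\mathcal{R}_{cs}(\mat{A})\cap\mathcal{R}_m(\mat{A})\subset\mathcal{R}_M$; since $(R_1,R_2)$ was an arbitrary achievable pair, $\mathcal{C}\subset\mathcal{R}_M$.

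The only delicate point — and the one I would spell out — is the claim that a single covariance matrix serves both bounds. This is not automatic from the lemma statements, each of which already carries its own optimization (over input distributions, resp.\ over the genie parameters $d_i$); one has to return to the two converse proofs and check that both reduce to a Gaussianization step governed by the code's time-averaged input covariance, with the structure (\ref{CovMat}) inherited from the independence of the two messages and the power constraints. Once that is established, the union-over-$\mat{A}$ packaging is immediate.
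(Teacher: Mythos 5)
Your proposal is correct and follows the same route the paper implicitly takes: the paper states this corollary as an ``immediate conclusion,'' and the underlying justification is exactly what you supply, namely that for any achievable pair both converses single-letterize to Gaussian inputs governed by the code's time-averaged covariance matrix, which has the form (\ref{CovMat}), so a single $\mat{A}$ certifies membership in both $\mathcal{R}_{cs}(\mat{A})$ and $\mathcal{R}_m(\mat{A})$. You correctly identify the shared-covariance argument as the one point that needs spelling out; the paper simply omits it.
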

 
For further upper bounds, one might refer to \cite{TianYener} where two new upper bounds were introduced by using a potent relay approach (a relay with no power constraint), which clearly serves as an upper bound for the capacity of the IC-FDR.

It can be easily seen that the upper bound in Lemma \ref{Maric} can be written as $R_1+R_2\leq \log(P)+o(\log(P))$. Similar argument holds for the cut-set bounds. In the following section, we give a sum rate upper bound that is tighter than both at high $P$.

\section{New upper bounds}

The first upper bound is motivated by results in \cite{CadambeJafar_ImpactOfRelays} that show that (causal) relays can not increase the degrees of freedom of a (fully connected) wireless network. The upper bound we provide next agrees with this result as it can be written as $R_1+R_2\leq\frac{1}{2}\log(P)+o(\log(P))$.
\begin{theorem}
\label{SumBound1}
$R_1+R_2\leq \bar{R}_{s1}(\rho_1,\rho_2)$ where $\rho_1^2+\rho_2^2\leq1$ and
\begin{align*}
\bar{R}_{s1}(\rho_1,\rho_2)&\triangleq C\left(\frac{h_{22}^2P}{1+\max\{h_{21}^2,h_{2r}^2\}P}\right)+C\left(\frac{h_{2r}^2}{h_{21}^2}\right)\\ 
&\hspace{-1cm}+C\left(P\left(h_{11}^2+h_{21}^2+h_{r1}^2+2h_{r1}h_{11}\rho_1+2h_{r1}h_{21}\rho_2\right)\right)
\end{align*}
\end{theorem}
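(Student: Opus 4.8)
The plan is to prove this by a genie-aided converse on the sum rate, in the spirit of the Gaussian IC bounds of \cite{EtkinTseWang} but adapted to absorb the relay; the target form $\frac12\log(P)+o(\log P)$ reflects the fact (which motivates this bound) that a causal relay cannot raise the sum degrees of freedom above one. The starting point is Fano's inequality: for any sequence of reliable codes, $n(R_1+R_2)=H(m_1,m_2)\le I(m_1;Y_1^n)+I(m_2;Y_2^n)+n\epsilon_n$ with $\epsilon_n\to 0$. I would then supply side information (a ``genie'') $S^n$ so that receiver $1$'s contribution carries the full $\log P$ while receiver $2$'s contribution collapses to a bounded quantity. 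The essential new ingredient compared to the plain IC is handling $X_r$: by the causality relation (\ref{Causality}), $x_r(k)=f_r(y_r^{k-1})$ is a \emph{deterministic} function of past relay observations, so any terminal that is given (enough to reconstruct) $y_r^n$ can recompute $x_r^n$ and cancel $h_{r1}x_r^n$ from $Y_1^n$. Concretely, once receiver $1$ knows $X_1^n$ it only needs the ``transmitter-$1$-free'' relay observation $h_{2r}X_2^n+Z_r^n$ to reconstruct $y_r^n$; handing it (a version of) this signal leaves receiver $1$ with two looks at $X_2^n$ — its own cross-link with gain $h_{21}$ and the relay's with gain $h_{2r}$ — which is why $\max\{h_{21}^2,h_{2r}^2\}$ and $C(h_{2r}^2/h_{21}^2)$ appear below.

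The bookkeeping would then split $I(m_1,m_2;Y_1^n,S^n)=I(m_1,m_2;Y_1^n\mid S^n)+I(m_1,m_2;S^n)$. For the first piece, choosing the genie independent of $Z_1^n$ gives $h(Y_1^n\mid S^n,m_1,m_2)=h(Z_1^n)$, so $I(m_1,m_2;Y_1^n\mid S^n)\le h(Y_1^n)-h(Z_1^n)\le\sum_k h(Y_{1k})-\frac n2\log(2\pi e)$; replacing $(X_1,X_2,X_r)$ by jointly Gaussian variables with the covariance matrix $\mat{A}$ of (\ref{CovMat}) (whose positive semidefiniteness is exactly what forces $\rho_1^2+\rho_2^2\le 1$), maximizing $\sum_k h(Y_{1k})$ under $\mathbb E[X_j^2]\le P$, and using concavity of $C(\cdot)$, this reduces to $n\,C\!\left(P(h_{11}^2+h_{21}^2+h_{r1}^2+2h_{r1}h_{11}\rho_1+2h_{r1}h_{21}\rho_2)\right)$, i.e.\ the third term. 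The second piece, $I(m_1,m_2;S^n)$, is the price of the genie; here the delicate point is that the raw relay observation carries $\Theta(\log P)$ bits about $m_2$, so one must condition it against the $h_{21}X_2^n$ component already present in $Y_1^n$ and against receiver $2$'s own channel (which is what caps the genuinely new information), single-letterize, and Gaussian-bound, yielding $n\,C\!\left(h_{22}^2P/(1+\max\{h_{21}^2,h_{2r}^2\}P)\right)+n\,C(h_{2r}^2/h_{21}^2)$, the first two terms. Dividing by $n$ and letting $n\to\infty$ gives the claim.

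I expect the genie-cost control in the second piece to be the main obstacle: a ``lazy'' genie that simply hands receiver $1$ the relay's observation (or $Y_r^n$) is easy to write but costs $\Theta(\log P)$, which would leave a useless $\log P$ sum-rate bound; the content is in routing that side information so that only one $\log P$ survives, i.e.\ in showing the new information beyond $Y_1^n$ and beyond receiver $2$'s own observation is bounded by $C(h_{22}^2P/(1+\max\{h_{21}^2,h_{2r}^2\}P))+C(h_{2r}^2/h_{21}^2)$. A secondary technical nuisance is that $x_r^n$ depends on past realizations of $Z_r$, so the relay's noise is correlated with receiver $1$'s received signal; this coupling must be defused carefully (conditioning on the inputs before bounding entropies, and invoking the jointly-Gaussian surrogate only at the entropy-maximization step) for the reduction to the single-letter covariance $\mat{A}$ to be legitimate.
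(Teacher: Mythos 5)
Your overall instinct --- an ETW-style genie bound in which one receiver keeps its full $\log P$ term while the genie's cost is cancelled rather than paid, with causality of $x_r$ used to tame the relay --- is the right one, and you even identify the correct side-information signal $h_{2r}X_2^n+Z_r^n$. But the accounting you propose does not close, and the step you defer as ``the main obstacle'' is precisely the content of the proof. First, your decomposition routes everything through receiver 1 via $I(m_1,m_2;Y_1^n,S^n)$. This is not justified as an upper bound on $n(R_1+R_2)$ (receiver 1 is only required to decode $m_1$, so Fano does not give you a joint mutual information at that terminal), and, more fatally, it can never produce the term $C\bigl(h_{22}^2P/(1+\max\{h_{21}^2,h_{2r}^2\}P)\bigr)$: the gain $h_{22}$ appears only in $Y_2^n$, which your genie and your receiver-1 output do not contain. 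The paper keeps the standard split $n(R_1+R_2-2\epsilon_n)\le I(m_1;Y_1^n)+I(m_2;Y_2^n,m_1,h_{2r}X_2^n+Z_r^n,\tilde Z^n)$, i.e.\ the genie $(m_1,\,h_{2r}X_2^n+Z_r^n,\,\tilde Z^n)$ with $\tilde Z^n=Z_1^n-\tfrac{h_{21}}{h_{2r}}Z_r^n$ is handed to receiver \emph{two}; the $h_{22}$ term is then exactly $h(Y_2^n\mid m_1,h_{2r}X_2^n+Z_r^n,\tilde Z^n)-h(Z_2^n)$, receiver 2's residual view of $X_2$ after the genie has already revealed a $\max\{h_{21}^2,h_{2r}^2\}$-strength look at it.

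Second, the cancellation you hope for is carried out in the paper via the identity $\tfrac{h_{21}}{h_{2r}}(h_{2r}X_2^n+Z_r^n)+\tilde Z^n=h_{21}X_2^n+Z_1^n$, which (together with the causality argument for the relay term) yields $h(h_{2r}X_2^n+Z_r^n\mid m_1,\tilde Z^n)\le h(Y_1^n\mid m_1)-n\log(h_{21}/h_{2r})$. The positive genie-cost entropy is thus absorbed by the $-h(Y_1^n\mid m_1)$ already present in $I(m_1;Y_1^n)$, leaving only $h(Y_1^n)$ (which, after the single-letter Gaussian maximization you correctly describe, gives the third, $\log P$-carrying term with the $\rho_1,\rho_2$ cross terms) plus bounded residuals $-n\log(h_{21}/h_{2r})-h(Z_r^n\mid\tilde Z^n)$ that combine into $C(h_{2r}^2/h_{21}^2)$. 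You name this cancellation as the crux but do not supply it, and your proposed vehicle for it --- bounding ``$I(m_1,m_2;S^n)$ conditioned against receiver 2's own channel'' --- is not a defined quantity in your own decomposition. As written, the proposal is a plausible plan whose central mechanism is missing and whose bookkeeping would have to be restructured along the lines above before it could yield the stated bound.
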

\begin{proof}
(Sketch) A genie gives $(m_1,h_{2r}X_2^n+Z_r^n,\tilde{Z}^n)$ to the second receiver where $\tilde{Z}^n=Z_1^n-\frac{h_{21}}{h_{2r}}Z_r^n$. We can show that $$h(h_{2r}X_2^n+Z_r^n|m_1,\tilde{Z}^n)\leq h(Y_1^n|m_1)-n\log(h_{21}/h_{2r}).$$
Moreover, it holds that $ h(Y_2^n|m_1,h_{2r}X_2^n+Z_r^n,\tilde{Z}^n)\leq$ $$\frac{n}{2}\log\left(2\pi e\left(1+\frac{h_{22}^2P}{1+\max\{h_{21}^2,h_{2r}^2\}P}\right)\right),$$
and $h(Z_r^n|\tilde{Z}^n)=\frac{n}{2}\log\left(2\pi e (h_{2r}^2)/(h_{2r}^2+h_{21}^2)\right).$
Now, using $n(R_1+R_2-2\epsilon_n)\leq I(m_1;Y_1^n)+I(m_2;Y_2^n,m_1,h_{2r}X_2^n+Z_r^n,\tilde{Z}^n)$ the result follows.
\end{proof}
\begin{corollary}
$R_1+R_2\leq \bar{R}_{S1}\triangleq\max_{\rho_1^2+\rho_2^2\leq1}\bar{R}_{s1}(\rho_1,\rho_2)=$
\begin{align*}
&C\left(P\left(h_{11}^2+h_{21}^2+h_{r1}^2+2h_{r1}\sqrt{h_{11}^2+h_{21}^2}\right)\right)\\
&+C\left(\frac{h_{22}^2P}{1+\max\{h_{21}^2,h_{2r}^2\}P}\right)+C\left(\frac{h_{2r}^2}{h_{21}^2}\right)
\end{align*}
\end{corollary}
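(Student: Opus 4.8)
The plan is to reduce the maximization in the corollary to a one-dimensional optimization. Among the three summands defining $\bar{R}_{s1}(\rho_1,\rho_2)$ in Theorem~\ref{SumBound1}, only one depends on $(\rho_1,\rho_2)$: the terms $C\!\left(\frac{h_{22}^2P}{1+\max\{h_{21}^2,h_{2r}^2\}P}\right)$ and $C\!\left(\frac{h_{2r}^2}{h_{21}^2}\right)$ are constants and can be pulled out of the maximum unchanged. Since $C(x)=\frac{1}{2}\log(1+x)$ is strictly increasing, maximizing the remaining summand is equivalent to maximizing its argument, so the whole problem reduces to evaluating
\[
\max_{\rho_1^2+\rho_2^2\le1}\ \bigl(h_{11}^2+h_{21}^2+h_{r1}^2+2h_{r1}h_{11}\rho_1+2h_{r1}h_{21}\rho_2\bigr).
\]

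Next I would observe that this objective is affine in $(\rho_1,\rho_2)$ with linear part $2h_{r1}\,(h_{11}\rho_1+h_{21}\rho_2)$, and $h_{r1}\ge0$. Hence the maximum is obtained by maximizing the linear functional $h_{11}\rho_1+h_{21}\rho_2$ over the closed unit disk, and by the Cauchy--Schwarz inequality this equals $\sqrt{h_{11}^2+h_{21}^2}$, attained at $(\rho_1,\rho_2)=(h_{11},h_{21})/\sqrt{h_{11}^2+h_{21}^2}$ when $(h_{11},h_{21})\ne(0,0)$ (the claim being trivial otherwise). This point satisfies $\rho_1^2+\rho_2^2=1$ and lies in $[-1,1]^2$, so it is feasible. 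Substituting back, the maximal argument equals $h_{11}^2+h_{21}^2+h_{r1}^2+2h_{r1}\sqrt{h_{11}^2+h_{21}^2}=\bigl(\sqrt{h_{11}^2+h_{21}^2}+h_{r1}\bigr)^2$, which produces the first term of $\bar{R}_{S1}$; adding back the two constant terms gives exactly the claimed closed form. Finally, since Theorem~\ref{SumBound1} asserts $R_1+R_2\le\bar{R}_{s1}(\rho_1,\rho_2)$ for every feasible $(\rho_1,\rho_2)$, it holds in particular at the maximizing pair, so $R_1+R_2\le\bar{R}_{S1}$.

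There is no genuinely hard step here; it is essentially a one-line Cauchy--Schwarz argument. The only two points that merit a word of care are (i) moving the maximum inside $C(\cdot)$, which is licensed by the monotonicity of $C$, and (ii) using $h_{r1}\ge0$ so that the linear form is maximized along $+(h_{11},h_{21})$ rather than $-(h_{11},h_{21})$ — omitting this sign check would spuriously flip the sign of the cross term. Equivalently, one could verify the inequality $h_{11}^2+h_{21}^2+h_{r1}^2+2h_{r1}h_{11}\rho_1+2h_{r1}h_{21}\rho_2\le\bigl(\sqrt{h_{11}^2+h_{21}^2}+h_{r1}\bigr)^2$ directly and note that equality holds at the stated $(\rho_1,\rho_2)$, but the Cauchy--Schwarz phrasing is the one I would present.
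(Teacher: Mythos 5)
Your proof is correct and matches what the paper leaves implicit: the corollary is stated without proof as an immediate consequence of Theorem~\ref{SumBound1}, and the Cauchy--Schwarz maximization of the affine term over the unit disk (using $h_{r1},h_{11},h_{21}\geq 0$ from the channel model) is exactly the intended computation. The only cosmetic quibble is your closing sentence: the theorem's $(\rho_1,\rho_2)$ are the (unknown) correlation coefficients induced by the code, so the correct reading is that the bound holds at some feasible pair and is therefore dominated by the maximum, rather than that it ``holds at the maximizing pair''--- but either reading yields the stated conclusion.
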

Notice that $\bar{R}_{S1}$ is computable in closed form, compared to $\bar{R}_{M}$ which requires minimization over the variables $d_1,\dots,d_5$ and maximization over $\mat{A}$.
Define the region $\mathcal{R}_{s1}(\mat{A})\triangleq\{(R_1,R_2):R_1+R_2\leq \bar{R}_{s1}(\rho_1,\rho_2),\ \rho_1,\rho_2 \text{ as in }\mat{A}\}$, then we have:
\begin{corollary}
$\mathcal{C}\subset\mathcal{R}_{S1}\triangleq\bigcup_{\mat{A}\succeq0}\mathcal{R}_{cs}(\mat{A})\cap\mathcal{R}_{s1}(\mat{A})$.
\end{corollary}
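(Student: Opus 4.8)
The plan is to read this claim as a bookkeeping statement that merely stitches together two bounds already in hand --- the cut-set bound of Lemma~\ref{Lemma:CutSet} and the sum-rate bound of Theorem~\ref{SumBound1} --- while making sure both are expressed through the \emph{same} admissible covariance matrix $\mat{A}$. Concretely, I would fix an arbitrary achievable pair $(R_1,R_2)\in\mathcal{C}$ and aim to exhibit a single $\mat{A}\succeq0$ with $(R_1,R_2)\in\mathcal{R}_{cs}(\mat{A})\cap\mathcal{R}_{s1}(\mat{A})$.

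First I would apply Lemma~\ref{Lemma:CutSet}: achievability of $(R_1,R_2)$ gives $(R_1,R_2)\in\mathcal{R}_{CS}=\bigcup_{\mat{A}\succeq0}\mathcal{R}_{cs}(\mat{A})$, so there is an admissible $\mat{A}_0\succeq0$, with off-diagonal parameters $\rho_1^0,\rho_2^0$ as in (\ref{CovMat}), such that $(R_1,R_2)\in\mathcal{R}_{cs}(\mat{A}_0)$. The small but essential point to verify next is that $\mat{A}_0\succeq0$ already forces $(\rho_1^0)^2+(\rho_2^0)^2\leq1$: expanding the determinant of the matrix in (\ref{CovMat}) gives $\det\mat{A}_0=P_1P_2P_r\bigl(1-(\rho_1^0)^2-(\rho_2^0)^2\bigr)\geq0$, and in the degenerate case $P_1P_2P_r=0$ one may harmlessly take $\rho_1^0=\rho_2^0=0$ without changing $\mat{A}_0$, so the inequality holds in all cases.

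Then I would invoke Theorem~\ref{SumBound1} with the particular choice $\rho_1=\rho_1^0$, $\rho_2=\rho_2^0$ --- legitimate precisely because of the constraint just checked --- to conclude $R_1+R_2\leq\bar{R}_{s1}(\rho_1^0,\rho_2^0)$, which by the definition of $\mathcal{R}_{s1}$ is exactly $(R_1,R_2)\in\mathcal{R}_{s1}(\mat{A}_0)$. Combining the two steps, $(R_1,R_2)\in\mathcal{R}_{cs}(\mat{A}_0)\cap\mathcal{R}_{s1}(\mat{A}_0)\subseteq\mathcal{R}_{S1}$, and since $(R_1,R_2)$ was arbitrary, $\mathcal{C}\subseteq\mathcal{R}_{S1}$.

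I do not expect a genuine obstacle here; the one thing not to overlook is that the covariance matrix witnessing membership in $\mathcal{R}_{cs}$ must be \emph{reused verbatim} when invoking Theorem~\ref{SumBound1}, otherwise the intersection inside the union in $\mathcal{R}_{S1}$ is not established. This reuse is valid only because Theorem~\ref{SumBound1} holds for \emph{every} $(\rho_1,\rho_2)$ with $\rho_1^2+\rho_2^2\leq1$, and that is precisely the positive-semidefiniteness requirement on the off-diagonal of $\mat{A}$; no further optimization over $\mat{A}$ and no new appeal to the channel equations are needed.
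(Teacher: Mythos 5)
Your overall structure --- exhibit a single $\mat{A}$ that witnesses membership in both $\mathcal{R}_{cs}(\mat{A})$ and $\mathcal{R}_{s1}(\mat{A})$ --- is the right one, and your determinant computation $\det\mat{A}=P_1P_2P_r\bigl(1-\rho_1^2-\rho_2^2\bigr)$ correctly explains why the constraint $\rho_1^2+\rho_2^2\leq1$ in Theorem \ref{SumBound1} is exactly the positive-semidefiniteness condition on (\ref{CovMat}). However, the justification you give for the crucial gluing step is wrong. Theorem \ref{SumBound1} is not a universally quantified statement over the disk $\rho_1^2+\rho_2^2\leq1$: the term $C\bigl(P(h_{11}^2+h_{21}^2+h_{r1}^2+2h_{r1}h_{11}\rho_1+2h_{r1}h_{21}\rho_2)\bigr)$ arises from bounding $h(Y_1^n)$ by a Gaussian entropy evaluated at the correlations $\rho_1,\rho_2$ actually induced by the code, and for a code with (say) independent inputs the bound evaluated at negative $\rho_1,\rho_2$ would be smaller than what $h(Y_1^n)$ can be guaranteed to satisfy --- i.e., false. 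This is also why the corollary following Theorem \ref{SumBound1} takes $\max_{\rho_1^2+\rho_2^2\leq1}$ rather than the minimum that your ``holds for every $(\rho_1,\rho_2)$'' reading would permit. So you cannot simply ``invoke Theorem \ref{SumBound1} with the particular choice $\rho_1=\rho_1^0$, $\rho_2=\rho_2^0$'' for a pair handed to you by a separate application of Lemma \ref{Lemma:CutSet}.

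The correct way to obtain the intersection is to observe that both bounds are proved with the \emph{same} witness: for a given $(2^{nR_1},2^{nR_2},n)$ code, let $\mat{A}$ be the time-averaged empirical covariance matrix of $(X_1,X_2,X_r)$. The Gaussian maximum-entropy and concavity arguments underlying Lemma \ref{Lemma:CutSet} give $(R_1,R_2)\in\mathcal{R}_{cs}(\mat{A})$ for this particular $\mat{A}$, and the genie argument of Theorem \ref{SumBound1} gives $R_1+R_2\leq\bar{R}_{s1}(\rho_1,\rho_2)$ for the $\rho_1,\rho_2$ of this same $\mat{A}$; only then does $(R_1,R_2)\in\mathcal{R}_{cs}(\mat{A})\cap\mathcal{R}_{s1}(\mat{A})$ follow. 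Your argument reaches the true conclusion, but the step that legitimizes reusing the covariance matrix rests on a claim that does not hold.
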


Now we provide another bound that is inspired from the weak interference upper bound of the IC in \cite{EtkinTseWang}.

\begin{theorem}
\label{SumBound2}
$R_1+R_2\leq\bar{R}_{s2}(\mat{A})$ where
\begin{align*}
\bar{R}_{s2}(\mat{A})\triangleq C\left(\Theta_{12}\right)+C(\Theta_{21})+C\left(\frac{h_{1r}^2}{h_{12}^2}\right)+C\left(\frac{h_{2r}^2}{h_{21}^2}\right),
\end{align*}
$\sigma_{i}^2=h_{ij}^2/(h_{ij}^2+h_{ir}^2)$, and
\begin{align*}
\Theta_{ji}&=h_{ji}^2P_j+h_{ri}^2(1-\rho_i^2)P_r+2h_{ji}h_{ri}\rho_j\sqrt{P_jP_r}\\
&\ \ +\frac{\sigma_{i}^2(h_{ii}\sqrt{P_i}+h_{ri}\rho_i\sqrt{P_r})^2}{\sigma_{i}^2+h_{ij}^2P_i}
\end{align*}
\end{theorem}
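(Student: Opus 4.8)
The plan is to imitate the genie‑aided weak‑interference sum‑rate bound of \cite{EtkinTseWang}, adding the ingredient needed to absorb the relay. I would give receiver $1$ the side information $S_1^n=(h_{12}X_1^n+Z_2^n,\ h_{1r}X_1^n+Z_r^n)$ --- the interference $X_1$ inflicts on receiver $2$ (with that receiver's noise) together with the relay's observation of $X_1$ (with the relay's noise) --- and, symmetrically, give receiver $2$ the side information $S_2^n=(h_{21}X_2^n+Z_1^n,\ h_{2r}X_2^n+Z_r^n)$. By Fano's inequality and the data processing inequality, $n(R_i-\epsilon_n)\le I(X_i^n;Y_i^n,S_i^n)$; summing over $i\in\{1,2\}$ and writing
\[
I(X_i^n;Y_i^n,S_i^n)=h(S_i^n)-h(S_i^n|X_i^n)+h(Y_i^n|S_i^n)-h(Y_i^n|X_i^n,S_i^n)
\]
reduces the bound to controlling these four families of terms.

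The two ``$C(\Theta)$'' terms come from $h(Y_i^n|S_i^n)$. Since $Y_{ik}$ and $S_{ik}$ are memoryless functions of $(X_{1k},X_{2k},X_{rk})$ and fresh noise, $h(Y_i^n|S_i^n)\le\sum_k h(Y_{ik}|S_{ik})$, and each summand is at most the Gaussian entropy with the matching conditional variance. Decomposing $h_{ri}X_r=h_{ri}\rho_i\sqrt{P_r/P_i}\,X_i+h_{ri}\rho_j\sqrt{P_r/P_j}\,X_j+h_{ri}V$ with $V\perp(X_i,X_j)$, the part of $Y_i$ uncorrelated with $X_i$ carries power $h_{ji}^2P_j+2h_{ji}h_{ri}\rho_j\sqrt{P_jP_r}+h_{ri}^2(1-\rho_i^2)P_r$, while observing $S_i$ leaves $X_i$ with posterior variance $P_i/(1+(h_{ij}^2+h_{ir}^2)P_i)=\sigma_i^2P_i/(\sigma_i^2+h_{ij}^2P_i)$, so the residual $X_i$‑correlated power is $\sigma_i^2(h_{ii}\sqrt{P_i}+h_{ri}\rho_i\sqrt{P_r})^2/(\sigma_i^2+h_{ij}^2P_i)$; adding the unit noise gives $\mathrm{Var}(Y_i|S_i)=1+\Theta_{ji}$, hence $h(Y_i^n|S_i^n)-\tfrac n2\log(2\pi e)\le nC(\Theta_{ji})$ after the usual concavity step that collapses the per‑letter second moments into one covariance matrix $\mat A$.

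The two ``$C(h_{ir}^2/h_{ij}^2)$'' terms come from $h(S_i^n)$: split $h(S_i^n)=h(h_{ij}X_i^n+Z_j^n)+h(h_{ir}X_i^n+Z_r^n\mid h_{ij}X_i^n+Z_j^n)$ and bound the second summand using $\mathrm{Var}(h_{ir}X_i+Z_r\mid h_{ij}X_i+Z_j)=1+h_{ir}^2P_i/(1+h_{ij}^2P_i)\le 1+h_{ir}^2/h_{ij}^2$, giving $h(S_i^n)\le h(h_{ij}X_i^n+Z_j^n)+\tfrac n2\log(2\pi e)+nC(h_{ir}^2/h_{ij}^2)$; also $h(S_i^n|X_i^n)=h(Z_j^n,Z_r^n)=n\log(2\pi e)$. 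Substituting all of this, the $\log(2\pi e)$ bookkeeping cancels and one is left with
\[
n(R_1+R_2-2\epsilon_n)\le n\,\bar R_{s2}(\mat A)+\sum_{i}\Big(h(h_{ij}X_i^n+Z_j^n)-h(Y_i^n|X_i^n,S_i^n)\Big),
\]
so it remains to show $h(Y_1^n|X_1^n,S_1^n)+h(Y_2^n|X_2^n,S_2^n)\ge h(h_{12}X_1^n+Z_2^n)+h(h_{21}X_2^n+Z_1^n)$.

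This last inequality is where the relay must be dealt with, and I expect it to be the main obstacle. From $X_i^n$ and $S_i^n$ one recovers $(Z_j^n,Z_r^n)$, so $h(Y_i^n|X_i^n,S_i^n)=h\big(h_{ji}X_j^n+h_{ri}X_r^n+Z_i^n\mid X_i^n,Z_r^n\big)$; absent the relay ($h_{ri}=0$) this equals $h(h_{ji}X_j^n+Z_i^n)$ and the inequality is an identity --- exactly the cancellation that drives the ETW argument. The point to push through is that, once $X_i^n$ and $Z_r^n$ are frozen, $X_{rk}=f_r(Y_r^{k-1})$ is a \emph{strictly causal} function of $X_j^n$, so the map $X_j^n\mapsto h_{ji}X_j^n+h_{ri}X_r^n$ is lower triangular with diagonal $h_{ji}$, hence a bijection of constant Jacobian, and adding such a ``past‑only'' term to $h_{ji}X_j^n$ should not decrease the relevant conditional entropy. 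Turning ``causal bijection'' into this entropy inequality is the delicate part: it is the only place the relay's dependence on \emph{both} messages enters, and a purely pointwise argument is fragile, so one has to lean on the chain rule together with the structure of $f_r$ (equivalently, reintroduce $Y_r^{k-1}$ as admissible conditioning). Granting it, the leftover terms are non‑positive, and letting $n\to\infty$ --- with $C(\Theta_{ji})$ and $C(h_{ir}^2/h_{ij}^2)$ continuous in the entries of $\mat A$ --- yields $R_1+R_2\le\bar R_{s2}(\mat A)$.
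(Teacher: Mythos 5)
Your skeleton is the right one and matches the paper's: an ETW-style genie, the decomposition of $I(X_i^n;Y_i^n,S_i^n)$, the per-letter Gaussian bound giving $C(\Theta_{ji})$ (your conditional-variance computation agrees with the paper's $\Theta_{ji}$), and the $C(h_{ir}^2/h_{ij}^2)$ penalty coming from the extra genie component. Indeed your pair $(h_{ij}X_i^n+Z_j^n,\,h_{ir}X_i^n+Z_r^n)$ is an invertible linear transform of the paper's single reduced-noise genie $S_i^n=h_{ij}X_i^n+W_i^n$, $W_i\sim\mathcal{N}(0,\sigma_i^2)$, taken together with $\tilde{Z}_j^n=Z_r^n-h_{ir}Z_j^n/h_{ij}$, so the final expression would coincide. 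The problem is that the step you label ``delicate'' and then grant is exactly the content of the theorem --- it is the only place the relay's causality is used --- and, as you have set things up, it does not go through. The ``lower-triangular bijection with constant Jacobian'' observation does not imply $h(T(X_1^n)+Z_2^n)\ge h(h_{12}X_1^n+Z_2^n)$: the differential entropy of a nonlinear image plus \emph{independent} noise is not controlled by the Jacobian of the noiseless map. The chain-rule route also fails under your conditioning on $(X_2^n,Z_1^n,Z_r^n)$: to strip $h_{r2}X_{rk}$ from $Y_{2k}$ you must produce $X_{rk}=f_r(Y_r^{k-1})$ from the conditioning plus past outputs, which requires $h_{1r}X_1^{k-1}$ exactly; the past outputs only reveal $h_{12}X_1^{k-1}+Z_2^{k-1}$, and the obvious repair of conditioning on $X_1^{k-1}$ (or $m_1$) overshoots, since $h(h_{12}X_{1k}+Z_{2k}\mid X_1^{k-1},\dots)$ can be strictly smaller than the term $h(h_{12}X_{1k}+Z_{2k}\mid h_{12}X_1^{k-1}+Z_2^{k-1})$ you need to accumulate into $h(h_{12}X_1^n+Z_2^n)$.

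The paper's proof resolves precisely this point by conditioning not on $Z_r^n$ but on the rotated noise $\tilde{Z}_2^n=Z_r^n-h_{1r}Z_2^n/h_{12}$ (together with $X_{r1}$, which is deterministic by causality). With that choice, from $(Y_2^{k-1},m_2,\tilde{Z}_2^n,X_{r1})$ one recursively recovers $h_{12}X_1^{k-1}+Z_2^{k-1}$, rescales it by $h_{1r}/h_{12}$ and adds $\tilde{Z}_2^{k-1}$ to obtain $h_{1r}X_1^{k-1}+Z_r^{k-1}$, hence $Y_r^{k-1}$ and $X_{rk}$: the relay's \emph{noisy} input is reconstructible without ever knowing $X_1^{k-1}$, so the relay term can be removed symbol by symbol inside the chain rule. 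The price is that $Z_2$ given $\tilde{Z}_2$ has residual variance $\sigma_1^2<1$, which is why the genie must carry the reduced noise $W_1\sim\mathcal{N}(0,\sigma_1^2)$ and where the $C(h_{1r}^2/h_{12}^2)$ terms really originate. To complete your argument you would have to rotate your genie pair into the coordinates (combined observation, $\tilde{Z}^n$) and run this reconstruction; as written, the cancellation inequality is asserted, not proved.
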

\begin{proof}
(Sketch) We give the genie information $(S_j^n,X_{r1})$ to receiver $j$ where $S_j^n=h_{jk}X_j^n+W_j^n$ for $j,k\in\{1,2\}$ and $j\neq k$, $W_j\sim\mathcal{N}\left(0,\sigma_{j}^2\right)$, $\sigma_{j}^2=h_{jk}^2/(h_{jk}^2+h_{jr}^2)$, and $X_{r1}$ is the first symbol of the relay transmit sequence $X_r^n$. Then, we show that $h(Y_2^n|m_2,X_{r1})\geq h(h_{12}X_1^n+Z_2^n|\tilde{Z}_2^n,X_{r1})=h(h_{12}X_1^n+V_2^n|X_{r1})=h(S_1^n|X_{r1})$. This follows by adding one condition $\tilde{Z}_2^n=Z_r^n-h_{1r}Z_2^n/h_{12}$ which reduces entropy, and then arguing that knowing $m_2,\tilde{Z}_2^n,X_{r1},Y_2^{i-1}$ we can construct all $(X_{r2},\dots,X_{ri})$ and that $X_{r1}$ is independent of $m_2$ due to causality. Using similar arguments as in \cite[Lemma 6]{AnnapureddyVeeravalli}, we have $V_2\sim\mathcal{N}(0,\sigma_1^2)$.
It follows that $n(R_1+R_2-2\epsilon_n)\leq h(Y_1^n|S_1^n)+h(Y_2^n|S_2^n)-h(W_1^n)-h(W_2^n)$ and the result follows.
\end{proof}
Thus, the following corollary follows.
\begin{corollary}
$R_1+R_2\leq \bar{R}_{S2}\triangleq\max_{\mat{A}\succeq0}\bar{R}_{s2}(\mat{A})$
\end{corollary}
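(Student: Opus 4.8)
The plan is to adapt the genie-aided weak-interference converse of \cite{EtkinTseWang} to the presence of the relay. I would start from Fano's inequality, $n(R_j-\epsilon_n)\le I(m_j;Y_j^n)$, and hand receiver $j$ the side information $(S_j^n,X_{r1})$, so that $n(R_j-\epsilon_n)\le I(m_j;Y_j^n,S_j^n,X_{r1})$, where $\{j,k\}=\{1,2\}$, $S_j^n=h_{jk}X_j^n+W_j^n$ (a noisy copy of receiver $j$'s own codeword as seen through its cross link), and $W_j\sim\mathcal{N}(0,\sigma_j^2)$ is independent of everything with $\sigma_j^2=h_{jk}^2/(h_{jk}^2+h_{jr}^2)$. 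Expanding this mutual information and using that (i) $m_j$ determines $X_j^n$, so $S_j^n$ collapses to $W_j^n$ up to a known shift, (ii) $W_j$ is independent of the channel, and (iii) $X_{r1}=f_r(y_r^0)$ is a message-independent constant whose entropy contributions cancel, one is left with $n(R_j-\epsilon_n)\le h(S_j^n)+h(Y_j^n\mid S_j^n)-h(W_j^n)-h(Y_j^n\mid m_j,X_{r1})$. The target of the core step is the inequality $h(Y_j^n\mid m_j,X_{r1})\ge h(S_k^n)$; granting this, adding the two single-user bounds makes the $h(S_1^n)$ and $h(S_2^n)$ terms cancel, leaving the key inequality $n(R_1+R_2-2\epsilon_n)\le h(Y_1^n\mid S_1^n)+h(Y_2^n\mid S_2^n)-h(W_1^n)-h(W_2^n)$ of the sketch.

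The heart of the proof is establishing $h(Y_j^n\mid m_j,X_{r1})\ge h(S_k^n)$, which is where the relay must be neutralized; take $j=2$. First I would append to the conditioning the correlated noise $\tilde Z_2^n$ with $\tilde Z_2(l)=Z_r(l)-\tfrac{h_{1r}}{h_{12}}Z_2(l)$, which can only decrease entropy. I then claim that, given $(m_2,X_{r1},\tilde Z_2^n)$, the map $Y_2^n\mapsto h_{12}X_1^n+Z_2^n$ is invertible: arguing by induction on $l$ and using causality $x_r(l)=f_r(y_r^{l-1})$, from $Y_2^{l-1}$ together with $m_2$ (which yields $X_2^{l-1}$) and the already recovered $X_r^{l-1}$ one extracts $h_{12}x_1(m)+z_2(m)$ on $m\le l-1$, then $\tilde Z_2(m)$ supplies $h_{1r}x_1(m)+z_r(m)$ and hence $y_r(m)$, so $x_r(l)$ is determined (the base case being $X_{r1}$). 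Consequently $h(Y_2^n\mid m_2,X_{r1},\tilde Z_2^n)=h(h_{12}X_1^n+Z_2^n\mid m_2,X_{r1},\tilde Z_2^n)=h(h_{12}X_1^n+Z_2^n\mid \tilde Z_2^n)$, dropping $m_2,X_{r1}$ by independence. Finally, the MMSE identity behind \cite[Lemma 6]{AnnapureddyVeeravalli} gives $\mathrm{Var}(Z_2(l)\mid\tilde Z_2(l))=h_{12}^2/(h_{12}^2+h_{1r}^2)=\sigma_1^2$, so conditioned on $\tilde Z_2^n$ the vector $h_{12}X_1^n+Z_2^n$ has the law of $h_{12}X_1^n+V^n$ with $V\sim\mathcal{N}(0,\sigma_1^2)$, i.e. the law of $S_1^n$; hence $h(Y_2^n\mid m_2,X_{r1})\ge h(S_1^n)$. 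The case $j=1$ is symmetric, with $\tilde Z_1(l)=Z_r(l)-\tfrac{h_{2r}}{h_{21}}Z_1(l)$ and residual variance $\sigma_2^2$.

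It then remains to single-letterize and Gaussianize the key inequality, which is routine. By the chain rule and since conditioning reduces entropy, $h(Y_j^n\mid S_j^n)\le\sum_l h(Y_j(l)\mid S_j(l))\le\sum_l\tfrac12\log\big(2\pi e\,\mathrm{Var}(Y_j(l)\mid S_j(l))\big)$, and since the conditional (MMSE) variance never exceeds the linear-MMSE variance it depends only on the time-$l$ second-order statistics $P_1(l),P_2(l),P_r(l),\rho_1(l),\rho_2(l)$; averaging these over time yields an admissible covariance $\mat{A}$. A short computation with jointly Gaussian inputs of covariance $\mat{A}$ — writing $X_r=\rho_1\sqrt{P_r/P_1}\,X_1+\rho_2\sqrt{P_r/P_2}\,X_2+U$ with $U$ independent of $(X_1,X_2)$, $\mathrm{Var}(U)=P_r(1-\rho_1^2-\rho_2^2)\ge0$, and using $\mathrm{Var}(X_2\mid S_2)=P_2\sigma_2^2/(\sigma_2^2+h_{21}^2P_2)$ — gives $\mathrm{Var}(Y_2\mid S_2)=1+\Theta_{12}$ and, symmetrically, $\mathrm{Var}(Y_1\mid S_1)=1+\Theta_{21}$. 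Pairing $h(Y_1^n\mid S_1^n)$ with $h(W_2^n)$ and $h(Y_2^n\mid S_2^n)$ with $h(W_1^n)$, and using $1/\sigma_1^2=(h_{12}^2+h_{1r}^2)/h_{12}^2$ and $1/\sigma_2^2=(h_{21}^2+h_{2r}^2)/h_{21}^2$, turns the two differences into $C(\Theta_{12})+C(h_{1r}^2/h_{12}^2)$ and $C(\Theta_{21})+C(h_{2r}^2/h_{21}^2)$; summing and letting $n\to\infty$ yields $R_1+R_2\le\bar R_{s2}(\mat{A})$ for a suitable admissible $\mat{A}$ (equivalently, $\le\max_{\mat{A}\succeq0}\bar R_{s2}(\mat{A})$). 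The main obstacle is the second step: making the inductive reconstruction of $X_r^n$ rigorous under causality and, crucially, verifying that the noise left after conditioning on $\tilde Z_j^n$ has variance exactly $\sigma_k^2$ — it is precisely this matching of variances that makes the $h(S_1^n),h(S_2^n)$ terms cancel and produces genie noises with the right variances; by comparison the single-letterization is standard.
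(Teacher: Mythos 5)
Your proposal is correct and follows essentially the same route as the paper's own (sketched) argument for Theorem~\ref{SumBound2}, from which the corollary is immediate: the genie $(S_j^n,X_{r1})$, the extra conditioning on $\tilde Z_j^n$ followed by the causal reconstruction of $X_r^n$ to show $h(Y_j^n\mid m_j,X_{r1})\ge h(S_k^n)$, the variance matching $\mathrm{Var}(Z_2\mid\tilde Z_2)=\sigma_1^2$ via \cite[Lemma 6]{AnnapureddyVeeravalli}, and the final single-letterization with Gaussian maximization yielding $C(\Theta_{12})+C(\Theta_{21})+C(h_{1r}^2/h_{12}^2)+C(h_{2r}^2/h_{21}^2)$. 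You supply more detail than the paper's sketch (notably the induction for reconstructing $X_r^n$ and the explicit conditional-variance computation), but the underlying ideas are identical.
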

In Figures \ref{Plot1} and \ref{Plot2}, we plot the sum rate upper bounds $\bar{R}_{CS}$, $\bar{R}_M$, $\bar{R}_{S1}$, and $\bar{R}_{S2}$ for comparison. Figure \ref{Plot1} shows the case where the interfering links are stronger than the direct links. In this case, it can be seen that $\bar{R}_{S1}$ is lower than all other bounds at moderate to high $P$. We also observe that in this case $\bar{R}_{S2}$ is not relevant.
\begin{figure}[ht]
\centering
\psfragscanon
\psfrag{x}[t]{P(dB)}
\psfrag{y}[b]{$R_1+R_2$}
\psfrag{CSB}{\scriptsize{$\bar{R}_{CS}$}}
\psfrag{IMB}{\scriptsize{$\bar{R}_M$}}
\psfrag{T1B}{\scriptsize{$\bar{R}_{S1}$}}
\psfrag{T2B}{\scriptsize{$\bar{R}_{S2}$}}
\includegraphics[width=\columnwidth,height=5cm]{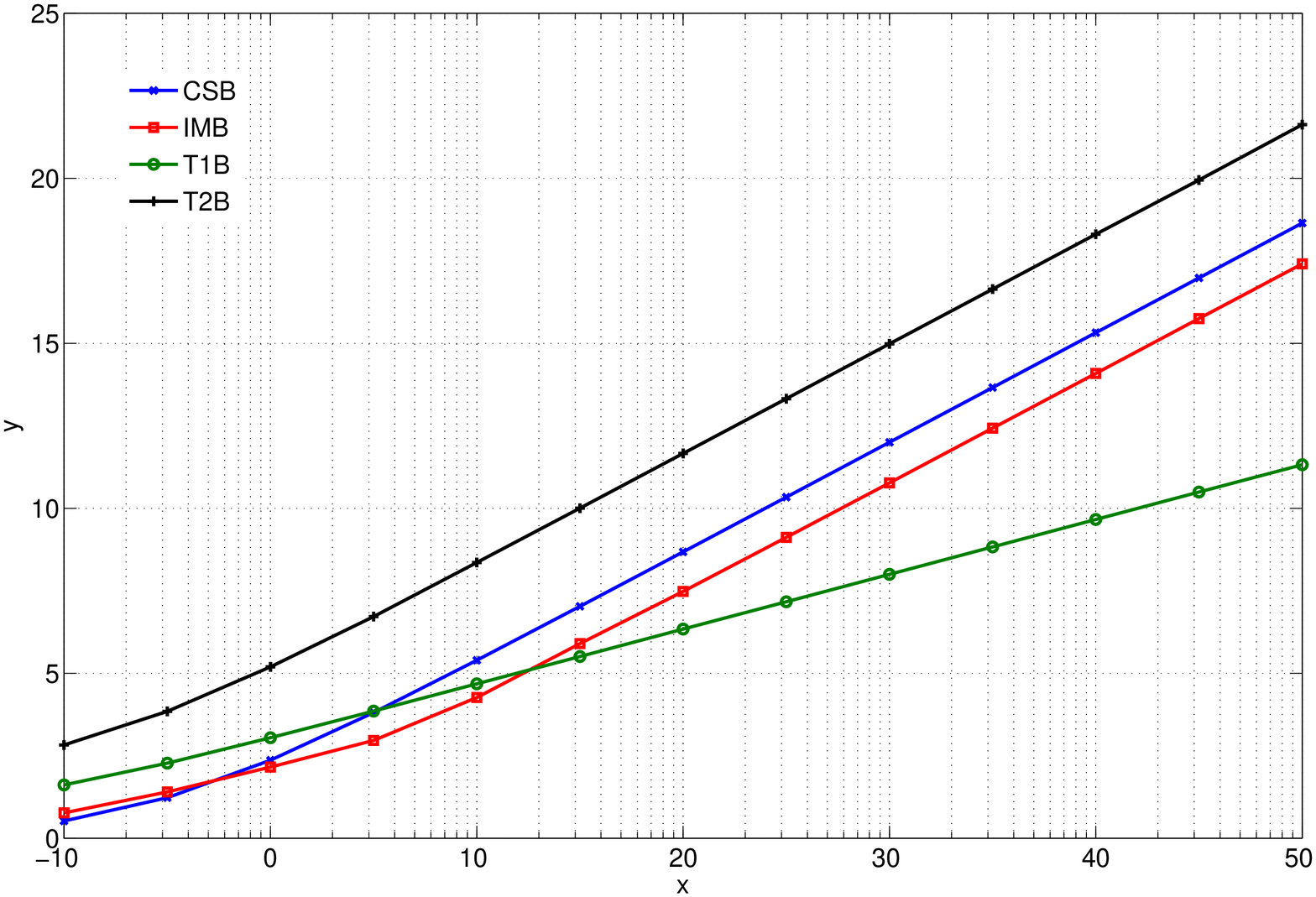}
\caption{Sum rate upper bounds for an IC-FDR with: $h_{11}=h_{22}=h_{r2}=1$, $h_{r1}=2$, $h_{12}=2$, $h_{21}^2=5$, $h_{1r}^2=h_{2r}^2=10$.}
\label{Plot1}
\end{figure}
Figure \ref{Plot2} shows the case where the interfering, source-relay, and relay-destination links are weak compared to the direct links. In this case, it can be seen that $\bar{R}_{S2}$ becomes relevant, since it is lower than $\bar{R}_{S1}$ at low $P$. It is slightly higher than $\bar{R}_M$ at low $P$. However, we consider this bound as it has the advantage that it involves less optimization steps. Moreover, as will be seen later, it is useful while calculating the gap to the achievable rate.
\begin{figure}[ht]
\centering
\psfragscanon
\psfrag{x}[t]{P(dB)}
\psfrag{y}[b]{$R_1+R_2$}
\psfrag{CSB}{\scriptsize{$\bar{R}_{CS}$}}
\psfrag{IMB}{\scriptsize{$\bar{R}_M$}}
\psfrag{T1B}{\scriptsize{$\bar{R}_{S1}$}}
\psfrag{T2B}{\scriptsize{$\bar{R}_{S2}$}}
\includegraphics[width=\columnwidth,height=5cm]{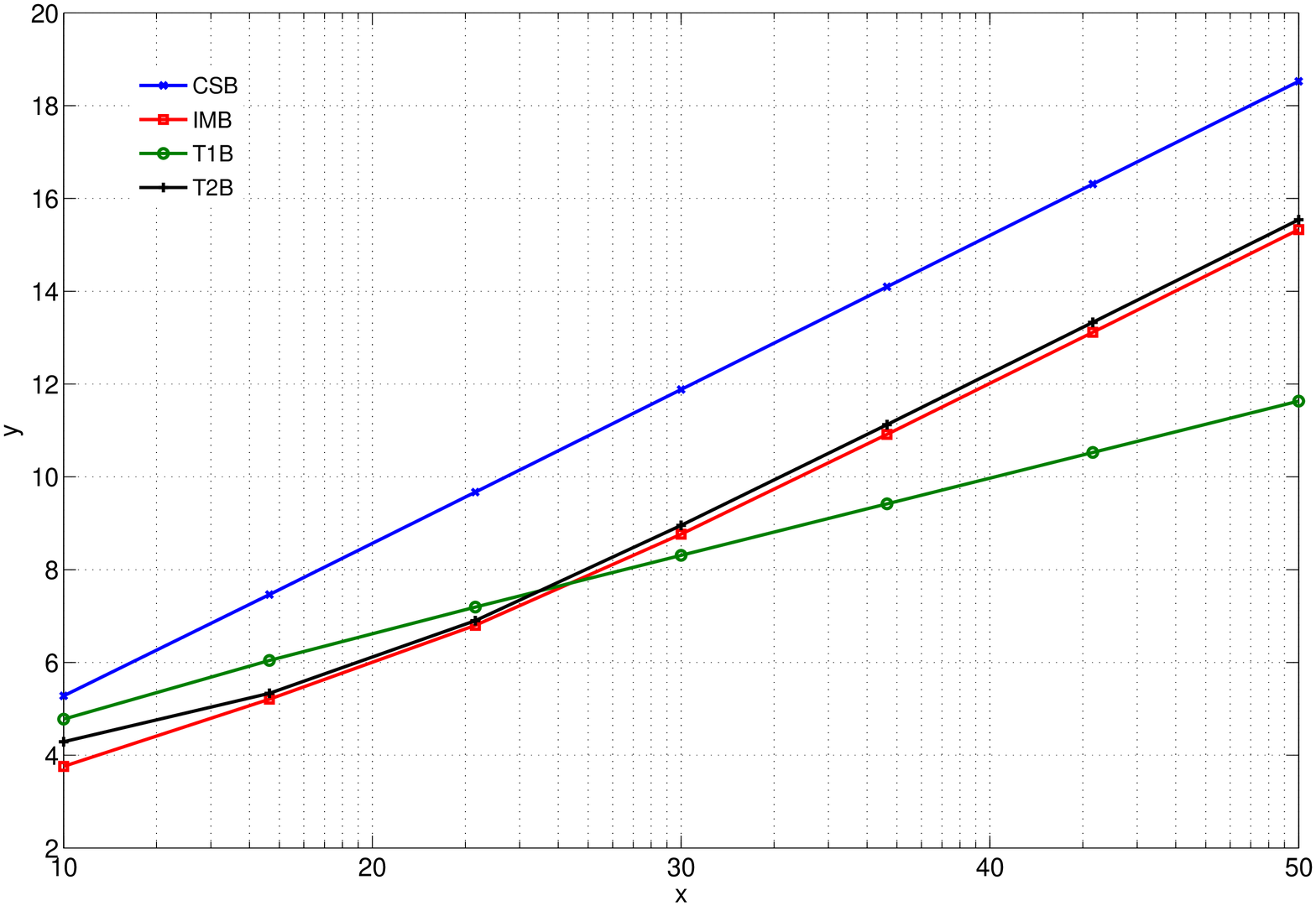}
\caption{Sum rate upper bounds for an IC-FDR with: $h_{11}=h_{22}=2$, $h_{r1}=h_{r2}=0.2$, $h_{12}=h_{21}=0.5$, $h_{1r}=h_{2r}=0.2$.}
\label{Plot2}
\end{figure}

\section{An Achievable Rate Region}
An achievable scheme for the IC-FDR was proposed in \cite{SahinErkip} that combines block Markov encoding, rate splitting, and backward decoding schemes. In this section, we give an achievable scheme similar to that in \cite{SahinErkip} with some simplification. We will provide a sketch of this achievable scheme.

The sources use super-position block Markov encoding, i.e. in a window of $B$ blocks, each source sends $B-1$ messages. The signal $x_i^n(b)$ sent by user $i$ in each block $b\in\{1,\dots,B\}$ is a super-position of codewords $u_i^n(b)$ and $u_i^n(b-1)$ from blocks $b$ and $b-1$ respectively and has power $P$. Moreover, each $u_i^n(b)$ is a  super-position of two codewords, $u_{i,c}^n(b)$ and $u_{i,p}^n(b)$ carrying common and private messages respectively.

In block $b$, The relay decodes $u_{1,c}^n(b)$, $u_{2,c}^n(b)$, $u_{1,p}^n(b)$, and $u_{2,cp}^n(b)$, and re-transmits them in the next block using power allocation parameters $\mu,\nu,\eta\in[0,1]$. This results in the following rate constraint at the relay
\begin{align}
\label{RelayRateConstraints1}
&a_1R_{1p}+a_2R_{1c}+a_3R_{2p}+a_4R_{2c}\leq\nonumber\\
&\hspace{0.8cm}C\left(P\left(\bar{\alpha}(a_1\bar{\gamma}+a_2\gamma)h_{1r}^2 +\bar{\beta}(a_3\bar{\delta}+a_4\delta)h_{2r}^2\right)\right)
\end{align}
for all $(a_1,a_2,a_3,a_4)\in\{0,1\}^4$, for some $\alpha,\beta,\gamma,\delta\in[0,1]$ denoting power allocation parameters at the sources. $R_{i,p}$ and $R_{i,c}$ are the rates of the private and common messages of source $i$ respectively. Thus the total rate achieved by each source is $R_i=R_{i,c}+R_{i,p}$.

The receivers use Willems' backward decoding to decode the messages starting from block $B$. In each block $b$, each receiver subtracts the interference that was already decoded in block $b+1$ and then proceeds to decode its private message and both common messages treating the remaining interference as noise as in \cite{EtkinTseWang}. Thus, the achievable common message rates lie in the intersection of the two regions $\mathcal{B}_1$ and $\mathcal{B}_2$ given by
\begin{align}
\label{CommonRateConstraints1}
\hspace{-0.3cm}\mathcal{B}_1=\left\{
\begin{array}{ll}
(R_{c1},R_{c2}):&\\
R_{c1}&\hspace{-0.3cm}\leq
C\left(\frac{\tilde{h}_{11c}^2 P}{1+\tilde{h}_{11p}^2P+\tilde{h}_{21p}^2P+h_{21}^2\bar{\beta}\bar{\delta}P}\right)\\
R_{c2}&\hspace{-0.3cm}\leq C\left(\frac{\tilde{h}_{21c}^2P}{1+\tilde{h}_{11p}^2P+\tilde{h}_{21p}^2P+h_{21}^2\bar{\beta}\bar{\delta}P}\right)\\
R_{c1}+R_{c2}&\hspace{-0.3cm}\leq C\left(\frac{\tilde{h}_{11c}^2P+\tilde{h}_{21c}^2P}{1+\tilde{h}_{11p}^2P+\tilde{h}_{21p}^2P+h_{21}^2\bar{\beta}\bar{\delta}P}\right)\end{array}\right\}
\end{align}
\begin{align}
\label{CommonRateConstraints2}
\hspace{-0.3cm}\mathcal{B}_2=\left\{
\begin{array}{ll}
(R_{c1},R_{c2}):&\\
R_{c1}&\hspace{-0.3cm}\leq C\left(\frac{\tilde{h}_{12c}^2P}{1+\tilde{h}_{12p}^2P+\tilde{h}_{22p}^2P+h_{12}^2\bar{\alpha}\bar{\gamma}P}\right)\\
R_{c2}&\hspace{-0.3cm}\leq C\left(\frac{\tilde{h}_{22c}^2P}{1+\tilde{h}_{12p}^2P+\tilde{h}_{22p}^2P+h_{12}^2\bar{\alpha}\bar{\gamma}P}\right)\\
R_{c1}+R_{c2}&\hspace{-0.3cm}\leq C\left(\frac{\tilde{h}_{12c}^2P+\tilde{h}_{22c}^2P}{1+\tilde{h}_{12p}^2P+\tilde{h}_{22p}^2P+h_{12}^2\bar{\alpha}\bar{\gamma}P}\right)\end{array}\right\}
\end{align}
where we use
\begin{align*}
\tilde{h}_{1jp}=(h_{1j}\sqrt{\alpha\bar{\gamma}}+h_{rj}\sqrt{\eta\bar{\mu}}),\ \  \tilde{h}_{1jc}=(h_{1j}\sqrt{\alpha\gamma}+h_{rj}\sqrt{\eta\mu}),\\ \tilde{h}_{2jp}=(h_{2j}\sqrt{\beta\bar{\delta}}+h_{rj}\sqrt{\bar{\eta}\bar{\nu}}),\ \  \tilde{h}_{2jc}=(h_{2j}\sqrt{\beta\delta}+h_{rj}\sqrt{\bar{\eta}\nu}),
\end{align*}
for receiver $j\in\{1,2\}$. Moreover, the following private message rate constraints must be satisfied,
\begin{eqnarray}
\label{PrivateRateConstraints1}
R_{p1}&\leq&C\left(\frac{\tilde{h}_{11p}^2P}{1+\tilde{h}_{21p}^2P+h_{21}^2\bar{\beta}\bar{\delta}P}\right)\\
\label{PrivateRateConstraints2}
R_{p2}&\leq&C\left(\frac{\tilde{h}_{22p}^2P}{1+\tilde{h}_{12p}^2P+h_{12}^2\bar{\alpha}\bar{\gamma}P}\right).
\end{eqnarray}

\begin{definition}
Denote by $\tilde{\mathcal{R}}(\zeta)$ the following set 
\begin{align*}
\tilde{\mathcal{R}}(\zeta)\triangleq\left\{(R_1,R_2): R_1=R_{p1}+R_{c1}, R_2=R_{p2}+R_{c2}\right\}
\end{align*}
with $R_{p1}$, $R_{c1}$, $R_{p2}$, $R_{c2}$ satisfying (\ref{RelayRateConstraints1}), (\ref{PrivateRateConstraints1}), and (\ref{PrivateRateConstraints2}) and $(R_{c1},R_{c2})\in\mathcal{B}_1\cap\mathcal{B}_2$, for a given power allocation vector $\zeta=(\alpha,\beta,\gamma,\delta,\eta,\mu,\nu)\in[0,1]^7$.
\end{definition}

Now we can state the following theorem.
\begin{theorem}
\label{AchievableRegion}
$\mathcal{C}\supset\mathcal{R}\triangleq\bigcup_{\zeta\in[0,1]^7}\tilde{\mathcal{R}}(\zeta).$
\end{theorem}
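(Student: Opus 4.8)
## Proof Plan for Theorem \ref{AchievableRegion}

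The plan is to establish achievability via a standard random coding argument built around the three ingredients advertised in the paper: superposition block-Markov encoding with Han-Kobayashi rate splitting at the sources, decode-and-forward at the relay, and Willems' backward decoding at the receivers. First I would fix a power allocation vector $\zeta=(\alpha,\beta,\gamma,\delta,\eta,\mu,\nu)\in[0,1]^7$ and generate the codebooks: for each block $b$, independent Gaussian codewords $u_{i,c}^n(b)$ and $u_{i,p}^n(b)$ of appropriate powers (governed by $\alpha,\gamma$ for user $1$ and $\beta,\delta$ for user $2$), set $u_i^n(b)$ to be their superposition, and let the transmitted signal $x_i^n(b)$ superimpose the fresh codeword $u_i^n(b)$ with a scaled copy of the previous block's $u_i^n(b-1)$, normalized to power $P$. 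The relay, having decoded the relevant codewords from block $b$, transmits in block $b+1$ a superposition of scaled copies of $u_{1,c}^n(b)$, $u_{2,c}^n(b)$, $u_{1,p}^n(b)$, $u_{2,p}^n(b)$ with powers dictated by $\eta,\mu,\nu$. Over a window of $B$ blocks each source conveys $B-1$ fresh messages, so the effective rate loss is $1/B\to 0$.

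Next I would analyze the three decoding stages separately. For the relay: in block $b$, using joint typicality decoding of the four codewords $u_{1,c}^n(b),u_{2,c}^n(b),u_{1,p}^n(b),u_{2,p}^n(b)$ from its observation $y_r^n(b)$ (treating the fed-back components from block $b-1$ as known, since the relay decoded them previously), the error probability vanishes provided every subset-sum of the rates is below the corresponding mutual information; collecting these $2^4$ constraints and evaluating the Gaussian mutual informations yields exactly the family of inequalities in \eqref{RelayRateConstraints1}. For the receivers: Willems' backward decoding proceeds from block $B$ down to block $1$. In block $b$, receiver $j$ has already recovered all messages from block $b+1$, hence knows the fed-back codewords present in $x_i^n(b)$ and the relay's transmission in block $b$; it subtracts these and is left with an effective Han-Kobayashi channel in which the effective channel gains to the common and private codewords are the coherent combinations $\tilde h_{ijc},\tilde h_{ijp}$ defined in the excerpt (source direct path plus relay path, added in amplitude because the relay coherently forwards). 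Decoding its own private message and both common messages while treating the other private message and the non-forwarded residual interference (the $h_{kj}^2\overline{\beta}\,\overline{\delta}$ or $h_{kj}^2\overline{\alpha}\,\overline{\gamma}$ terms) as Gaussian noise, exactly as in \cite{EtkinTseWang}, gives the private-rate bounds \eqref{PrivateRateConstraints1}, \eqref{PrivateRateConstraints2} and the common-rate MAC-type regions $\mathcal{B}_1$ at receiver $1$ and $\mathcal{B}_2$ at receiver $2$; the common message pair must lie in $\mathcal{B}_1\cap\mathcal{B}_2$ since both receivers must decode both common messages. A union bound over the finitely many error events at all $B$ blocks, together with letting $n\to\infty$ then $B\to\infty$, drives $P_e\to 0$.

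Finally I would conclude that any $(R_1,R_2)$ with $R_1=R_{p1}+R_{c1}$, $R_2=R_{p2}+R_{c2}$ meeting all the listed constraints is achievable, i.e. $\tilde{\mathcal{R}}(\zeta)\subset\mathcal{C}$, and taking the union over $\zeta\in[0,1]^7$ (time-sharing / closure is unnecessary here since each $\tilde{\mathcal{R}}(\zeta)$ is already achieved by a single scheme) yields $\mathcal{R}\subset\mathcal{C}$.

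The main obstacle I expect is a careful bookkeeping issue rather than a conceptual one: verifying that, under backward decoding, the interference terms that a receiver cannot cancel are precisely the private codeword of the other user from the \emph{current} block plus the not-forwarded part of that user's signal — in particular checking that the block-$(b-1)$ superposition component and the relay's current-block transmission are genuinely known at decoding time, so that the residual-interference powers appearing in the denominators of \eqref{CommonRateConstraints1}--\eqref{PrivateRateConstraints2} are correct. One must also confirm the power-normalization constraints $\mathbb{E}[X_j^2]\le P$ are met by the superposition structure for all valid $\zeta$, and that the coherent-combining gains $\tilde h_{ijc},\tilde h_{ijp}$ indeed arise from the chosen phase alignment (all channel gains are nonnegative, so amplitude addition is consistent). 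These are routine but must be done carefully to match the exact expressions claimed.
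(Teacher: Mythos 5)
Your plan follows essentially the same route as the paper, which itself only sketches this scheme: superposition block-Markov encoding with Han--Kobayashi rate splitting at the sources, decode-and-forward at the relay yielding the $2^4$ MAC-type constraints in (\ref{RelayRateConstraints1}), and Willems' backward decoding yielding $(R_{c1},R_{c2})\in\mathcal{B}_1\cap\mathcal{B}_2$ together with (\ref{PrivateRateConstraints1})--(\ref{PrivateRateConstraints2}). One bookkeeping slip to fix: when backward-decoding block $b$, the receiver already knows the \emph{fresh} codewords $u_i^n(b)$ (their messages were recovered while processing block $b+1$) and subtracts those, whereas the block-$(b-1)$ superposition component and the relay's block-$b$ transmission are exactly what it decodes next --- it is because these two are decoded jointly, not cancelled, that the coherent gains $\tilde h_{ijc},\tilde h_{ijp}$ arise; with that inversion corrected, your argument matches the paper's.
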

Figure \ref{Reg1} shows the achievable rate region as given in Theorem \ref{AchievableRegion} with the outer bounds for comparison. As shown, our sum-rate outer bound $\mathcal{R}_{S1}$ is tighter than the other outer bounds in this case.

\begin{figure}[ht]
\centering
\psfragscanon
\psfrag{x}[t]{$R_1$}
\psfrag{y}{$R_2$}
\psfrag{CSB}{\scriptsize{$\mathcal{R}_{CS}$}}
\psfrag{IMB}{\scriptsize{$\mathcal{R}_M$}}
\psfrag{SIB}{\scriptsize{$\mathcal{R}_{S1}$}}
\psfrag{ACH}{\scriptsize{$\mathcal{R}$}}
\includegraphics[width=0.8\columnwidth,height=5cm]{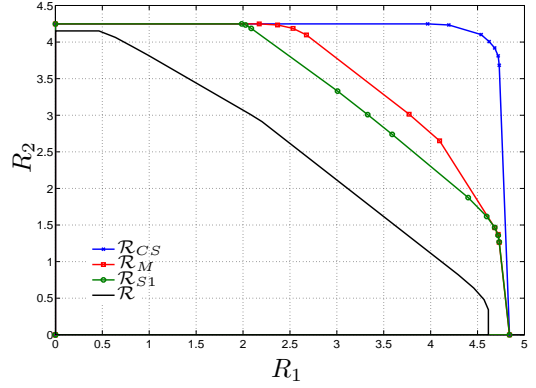}
\caption{Rate region outer and inner bounds for an IC-FDR with: $h_{11}=h_{22}=1$, $h_{r1}=2$, $h_{r2}=1$, $h_{12}=2$, $h_{21}^2=5$, $h_{1r}^2=h_{2r}^2=10$, $P=20dB$.}
\label{Reg1}
\end{figure}

\section{The Symmetric IC-FDR}
The symmetric IC-FDR has $h_{11}=h_{22}=h_d$, $h_{21}=h_{12}=h_c$, and $h_{r1}=h_{r2}=h_r$. 
By fixing $\alpha=\beta=\eta=0.5$, $\mu=\nu$, and choosing $\bar{\gamma}=\bar{\delta}=\frac{2}{h_c^2P}$, we can show that the following symmetric rate $R_1=R_2=R_s$ is achievable.
\begin{corollary}
\label{SymRate}
$(R_s,R_s)\in\mathcal{C}$ where
\begin{align*}
R_{s}&=\min\left\{\frac{1}{2}C(h_{sr}^2P),\min\{R_A,R_B,R_C\}-\frac{1}{2}\log(3)\right\},
\end{align*}
\begin{align*}
R_A&=\frac{1}{2}C\left(2+\Phi\right)+\frac{1}{2}C\left(2+\Phi+\Psi+\Omega\right)\nonumber\\
R_B&=C\left(2+\Phi+\Psi\right)\\
R_C&=C\left(2+\Phi+\Omega\right),
\end{align*} with $\Phi=h_d^2/h_c^2$, $\Psi=\left(\sqrt{{h_d^2P}/{2}-\Phi}+\sqrt{{h_r^2P}/{2}}\right)^2$, and $\Omega=\left(\sqrt{{h_c^2P}/{2}-1}+\sqrt{{h_r^2P}/{2}}\right)^2$.
\end{corollary}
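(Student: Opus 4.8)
The plan is to realise $(R_s,R_s)$ as one explicit point of the achievable region $\mathcal{R}$ of Theorem \ref{AchievableRegion}. I would fix the power-allocation vector $\zeta$ as announced — $\alpha=\beta=\eta=\tfrac12$, $\gamma=\delta$, $\bar\gamma=\bar\delta=\tfrac{2}{h_c^2P}$ (so $h_c^2P\ge2$ is assumed, to keep $\bar\gamma\le1$) — together with $\mu=\nu=1$, which makes the relay forward only the two common messages. By the symmetry $h_{11}=h_{22}=h_d$, $h_{12}=h_{21}=h_c$, $h_{1r}=h_{2r}=h_{sr}$, $h_{r1}=h_{r2}=h_r$, the whole constraint set defining $\tilde{\mathcal{R}}(\zeta)$ is invariant under exchanging the two user/receiver labels, so I lose nothing by restricting to the symmetric corner $R_{p1}=R_{p2}=R_p$, $R_{c1}=R_{c2}=R_c$, $R_s=R_p+R_c$. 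The point of $\bar\gamma=\bar\delta=\tfrac{2}{h_c^2P}$ is that it forces every private-signal power seen at the unintended receiver to equal $1$ (these are $\tilde h_{12p}^2P$, $\tilde h_{21p}^2P$, $h_{21}^2\bar\beta\bar\delta P$ and $h_{12}^2\bar\alpha\bar\gamma P$), while $\tilde h_{11p}^2P=\tilde h_{22p}^2P=\Phi$; and, using $h_d^2\alpha\gamma P=\tfrac{h_d^2P}{2}-\Phi$, $h_c^2\beta\delta P=\tfrac{h_c^2P}{2}-1$ and $h_r^2\eta\mu P=h_r^2\bar\eta\nu P=\tfrac{h_r^2P}{2}$ and combining the coherent source and relay terms, $\tilde h_{11c}^2P=\tilde h_{22c}^2P=\Psi$ and $\tilde h_{12c}^2P=\tilde h_{21c}^2P=\Omega$.

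With these identifications the reduced constraints follow at once. The denominators in (\ref{CommonRateConstraints1})--(\ref{CommonRateConstraints2}) collapse to $D\triangleq1+\Phi+1+1=3+\Phi$, so a symmetric common rate is feasible iff $R_c\le\min\{C(\Psi/D),\,C(\Omega/D),\,\tfrac12C((\Psi+\Omega)/D)\}$; the denominators in (\ref{PrivateRateConstraints1})--(\ref{PrivateRateConstraints2}) collapse to $3$, so $R_p\le C(\Phi/3)$; and the relay constraint (\ref{RelayRateConstraints1}) with $(a_1,a_2,a_3,a_4)=(1,1,1,1)$ reads $R_1+R_2\le C\big(P(\bar\alpha h_{1r}^2+\bar\beta h_{2r}^2)\big)=C(h_{sr}^2P)$, i.e.\ $R_s\le\tfrac12C(h_{sr}^2P)$. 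I would then combine the two receiver bounds via the elementary identities $C(\Phi/3)=\tfrac12\log D-\tfrac12\log3$, $C(a/D)+C(\Phi/3)=C(2+\Phi+a)-\tfrac12\log3$ for $a\in\{\Psi,\Omega\}$, and $\tfrac12C((\Psi+\Omega)/D)+C(\Phi/3)=\tfrac14\log\big(D(D+\Psi+\Omega)\big)-\tfrac12\log3=R_A-\tfrac12\log3$: maximising $R_p+R_c$ over $R_p\le C(\Phi/3)$ and the common bound gives precisely $\min\{R_A,R_B,R_C\}-\tfrac12\log3$, and taking the minimum of this with $\tfrac12C(h_{sr}^2P)$ produces the stated $R_s$.

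The step that really needs care — the main obstacle — is feasibility against the remaining relay constraints in (\ref{RelayRateConstraints1}), those with some $a_i=0$. At the symmetric point, with $s=a_1+a_3$ and $t=a_2+a_4\in\{0,1,2\}$, these are $sR_p+tR_c\le C\big(\tfrac12h_{sr}^2P(s\bar\gamma+t\gamma)\big)$, and the decisive elementary fact is $\tfrac12C(x)\le C(x/2)$ (equivalently $1+x\le(1+x/2)^2$) together with the monotonicity of $C$, which already makes $(s,t)=(1,1)$ redundant given $(2,2)$. When $\min\{R_A,R_B,R_C\}-\tfrac12\log3\le\tfrac12C(h_{sr}^2P)$, I would check that the maximal split $R_p=C(\Phi/3)$, $R_c$ at its maximum, meets all the $(s,t)$ constraints (the tight one being $(1,0)$, i.e.\ $C(\Phi/3)\le C(h_{sr}^2/h_c^2)$); in the complementary regime the target $R_s=\tfrac12C(h_{sr}^2P)$ is attained by a sub-maximal split $R_p+R_c=\tfrac12C(h_{sr}^2P)$ with $R_p$ lowered to whatever the relay permits, the slack in the receiver bounds being enough to absorb it. Everything else is plain substitution into Theorem \ref{AchievableRegion}.
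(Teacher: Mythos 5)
Your construction is exactly the paper's: the paper proves this corollary only by naming the power allocation $\alpha=\beta=\eta=\tfrac12$, $\mu=\nu$, $\bar\gamma=\bar\delta=2/(h_c^2P)$ and substituting into Theorem~\ref{AchievableRegion}, and your identification of the effective powers ($\Phi$, $\Psi$, $\Omega$, the denominators $3+\Phi$ and $3$, the identities yielding $R_A,R_B,R_C$ minus $\tfrac12\log 3$, and the cap $\tfrac12 C(h_{sr}^2P)$ from the all-ones relay constraint) is correct and constitutes the whole of the intended argument.

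However, the step you yourself single out as the main obstacle is not actually closed, and it is a genuine gap. Two problems. First, the tightest private-message relay constraint is not $(s,t)=(1,0)$ but $(2,0)$: since $C(x)\ge\tfrac12 C(2x)$, the constraint $2R_p\le C(2h_{sr}^2/h_c^2)$ is stricter than $R_p\le C(h_{sr}^2/h_c^2)$. Second, and more seriously, the claim that in the regime $R_s=\tfrac12 C(h_{sr}^2P)$ ``the slack in the receiver bounds is enough to absorb'' the relay cap on $R_p$ does not hold for all channel gains. Take, say, $h_c^2P=4$ (so $\gamma=\delta=\tfrac12$ and $\Omega=(1+\sqrt{h_r^2P/2})^2$), $h_r$ small, $h_d\gg h_c$, and $h_{sr}$ of the order of $h_c$. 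Then the receiver bound $R_c\le C(\Omega/(3+\Phi))$ is nearly zero because $\Phi=h_d^2/h_c^2$ is large while $\Omega\approx 1$, and $R_p$ is relay-capped at roughly $\tfrac12 C(2h_{sr}^2/h_c^2)$, far below $C(\Phi/3)$; the sum $R_p+R_c$ then falls strictly short of $\tfrac12 C(h_{sr}^2P)$, so the stated $R_s$ is not reached with this fixed $\zeta$. To complete the proof one must either restrict attention to the strong source-relay regime (\ref{MABC}) together with a condition of the form $h_d^2\lesssim h_{sr}^2$ --- which is the only regime in which the paper subsequently uses the corollary --- or allow $\mu=\nu<1$ and redo the bookkeeping. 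As written, your final paragraph asserts feasibility rather than proving it.
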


Notice that the achievable symmetric rate in an IC is also achievable in the IC-FDR, by simply ignoring the relay and using the IC scheme in \cite{EtkinTseWang}.
\begin{theorem}
\label{IC_Achievable}
$(R_s^{IC},R_s^{IC})\in\mathcal{C}$ where
\begin{align*}
R_{s}^{IC}&=\left\{\begin{array}{ll}
\min\left\{R^{IC}_A,R^{IC}_B\right\}&\text{if } h_c^2\leq h_d^2\\
\min\left\{R^{IC}_C,R^{IC}_D\right\}&\text{otherwise}
\end{array}\right.
\end{align*}
with 
\vspace{-0.4cm}
\begin{align*}
R^{IC}_A&=C\left(h_c^2P+{h_d^2P}/{h_c^2P}\right)-1/2,\\
2R^{IC}_B&=C(h_d^2P+h_c^2P)+C\left({h_d^2P}/{h_c^2P}\right)-1,\\
R^{IC}_C&=C(h_d^2P),\\
2R^{IC}_D&=C(h_d^2P+h_c^2P).
\end{align*}
\end{theorem}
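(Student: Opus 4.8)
\emph{Proof plan.} The whole point of the statement is that the relay can be harmlessly switched off, so the first step is to make that precise. Under the causality constraint \eqref{Causality} the relay map $f_r$ is an arbitrary (causal) function of its past observations, and in particular $f_r\equiv 0$, i.e.\ $x_r(k)=0$ for all $k$, is an admissible choice. With this choice the terms $h_{r1}x_r(k)$ and $h_{r2}x_r(k)$ disappear from the channel outputs, and what remains is exactly the standard two-user symmetric Gaussian interference channel with direct gain $h_d$, cross gain $h_c$, unit-variance noise, and per-user power $P$. Hence $\mathcal{C}$ for the IC-FDR contains every rate pair achievable on that IC, and it suffices to exhibit an IC scheme attaining $(R_s^{IC},R_s^{IC})$.

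For that scheme I would use the Gaussian Han--Kobayashi scheme \cite{HanKobayashi} with the power split of \cite{EtkinTseWang}: each source sends $X_i=X_{i,c}+X_{i,p}$ with independent Gaussian summands of powers $P_c$ and $P_p$, $P_c+P_p=P$, and private power chosen so that the \emph{interfering} private stream arrives at the noise floor, $h_c^2 P_p=\min\{1,h_c^2P\}$ (so $P_p=1/h_c^2$ whenever $P\ge 1/h_c^2$). Each receiver jointly decodes both common codewords and its own private codeword, treating the other private codeword as noise; by symmetry of the gains the two receivers impose identical constraints, which form a three-user multiple-access region with effective noise $1+h_c^2P_p$. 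Setting $R_{1,c}=R_{2,c}=R_c$, $R_{1,p}=R_{2,p}=R_p$, $R_s=R_c+R_p$, and reading off the active MAC constraints at the symmetric corner point gives, after the substitution $h_c^2P_p=1$ (effective noise $=2$) and $P_c=P-1/h_c^2$: the constraint coming from the own-private-plus-cross-common pair collapses to $R_s\le C\big(h_c^2P+h_d^2/h_c^2\big)-\tfrac12$, and the full sum constraint together with the other receiver's private bound collapses to $2R_s\le C\big((h_d^2+h_c^2)P\big)+C\big(h_d^2/h_c^2\big)-1$, i.e.\ $R_s\le\min\{R^{IC}_A,R^{IC}_B\}$; the remaining MAC inequalities are slack at this point. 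When $h_c^2>h_d^2$ the interference is strong, the private layer is useless, and the Han--Kobayashi region degenerates to the region with both messages common, where each receiver decodes a two-user Gaussian MAC; its symmetric point is governed by the single-user bound $C(h_d^2P)$ and the MAC sum bound $C((h_d^2+h_c^2)P)$, giving $R_s\le\min\{R^{IC}_C,R^{IC}_D\}$ (we only need achievability, though in this regime these are in fact tight). Finally I would note the degenerate sub-case $P<1/h_c^2$, in which $P_c=0$, everything is private, and the first-regime expressions reduce continuously to the treat-interference-as-noise rate $C\big(h_d^2P/(1+h_c^2P)\big)$, which is consistent with the stated formulas at the boundary $P=1/h_c^2$.

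The only nontrivial part is the bookkeeping in the second paragraph: verifying that at the symmetric corner exactly the two listed MAC constraints are binding, and that the constants $\tfrac12$ and $1$, together with the $h_d^2/h_c^2$ terms, come out precisely as written once $h_c^2P_p$ is set to $1$ and $P_c$ to $P-1/h_c^2$ (the identity $\tfrac12\log(1+x)-\tfrac12=\tfrac12\log\frac{1+x}{2}$ is what matches the loosened closed forms to the true rates). I do not expect a genuine obstacle here — it is the equal-rate, equal-gain specialization of the analysis in \cite{EtkinTseWang} — so the real content of the proof is the one-line reduction that silencing the relay is legitimate and costs nothing, after which the IC achievability result applies verbatim.
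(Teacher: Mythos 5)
Your proposal is correct and follows essentially the same route as the paper, which offers no separate proof beyond the remark preceding the theorem: set $x_r(k)=0$ (admissible by the causality constraint), reduce the IC-FDR to a symmetric Gaussian IC, and invoke the Han--Kobayashi scheme with the power split of \cite{EtkinTseWang}, whose symmetric-rate evaluation yields exactly $R^{IC}_A,\dots,R^{IC}_D$. Your additional bookkeeping of the binding MAC constraints is consistent with that reference and adds nothing that conflicts with the paper.
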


\section{Gap Analysis}
In this section, we will bound the gap between the achievable symmetric rate and the upper bounds. The upper bound for the achievable symmetric rate is given by $$\bar{R}\triangleq\max_{(R,R)\in\mathcal{C}}R\leq\frac{1}{2}\min\left\{\bar{R}_M,\bar{R}_{S1},\bar{R}_{S2},\bar{R}_{CS}\right\}.$$
\subsection{Strong source-relay links}
We start by considering the achievable rate in corollary \ref{SymRate}. Assume that
\begin{align} 
\label{MABC}
\frac{1}{2}\log(h_{sr}^2P)\geq\min\{R_A,R_B,R_C\}-\frac{1}{2}\log(3).
\end{align}
Then, according to the term that dominates the minimization in (\ref{MABC}), the gap between $\bar{R}$ and $R_s$ satisfies:
\begin{align*}
&\bar{R}-R_{s}\leq\\
&\left\{\begin{array}{lcr}
\frac{3}{4}+\frac{1}{2}\log(3)+\frac{1}{2}C\left(\frac{h_{sr}^2}{h_{c}^2}\right)& \text{if }& R_B,R_C\geq R_A\\
 1+\frac{1}{2}\log(3)& \text{if }& R_A,R_C\geq R_B\\
1+\frac{1}{2}\log(3)+\frac{1}{2}\log(5)+C\left(\frac{h_{sr}^2}{h_c^2}\right)& \text{if }& R_A,R_B\geq R_C
\end{array}\right.
\end{align*}
Thus the upper and lower bounds are within a finite gap. However, this gap is not universal, it depends on $h_{sr}^2/h_{c}^2$.

\subsection{Weak relay-destination links}
In this case, it can be shown that by utilizing transmission schemes for the IC, i.e. ignoring the relay, we can achieve within a constant gap the upper bounds for any value of $h_{sr}$. To see this, assume that $h_r^2\leq\min\{h_d^2,h_c^2\}$, then it can be shown that the gap between $\bar{R}$ and $R_s^{IC}$ given in Theorem \ref{IC_Achievable} satisfies
\begin{align*}
\bar{R}-R_s\leq\left\{\begin{array}{lcr}
1.5+C(h_{sr}^2/h_c^2)&\text{if }&R_s=R^{IC}_A\\
7/8+0.5C(h_{sr}^2/h_c^2)&\text{if }&R_s=R^{IC}_B\\
1&\text{if }&R_s=R^{IC}_C\\
5/8+0.5C(h_{sr}^2/h_c^2)&\text{if }&R_s=R^{IC}_D
\end{array}\right.
\end{align*}

Consequently, if the IC-FDR has $h_r^2\leq\min\{h_d^2,h_c^2\}$, then by ignoring the relay and operating the IC-FDR as an IC, we achieve its sum capacity within a finite gap. Notice that in this case the value of $h_{sr}$ does not limit our achievable rates since we do not use the relay.

In Figure \ref{Gap_3d}, we plot the gap $\Delta=\bar{R}-\underline{R}$ as a function of $$a=\frac{\log(h_c^2P)}{\log(h_d^2P)}\quad \text{and}\quad b=\frac{\log(h_{sr}^2P)}{\log(h_d^2P)}$$
for an IC-FDR with $h_d=h_r=1,$ and $P=30dB$, where $$\underline{R}\triangleq\max\{R_s^{IC},\max_{(R,R)\in\mathcal{R}}R\}.$$ This plots shows the gap for channels with different $h_c$ and $h_{sr}$.
\begin{figure}[ht]
\centering
\psfragscanon
\psfrag{x}{$a$}
\psfrag{y}{$b$}
\includegraphics[width=0.8\columnwidth]{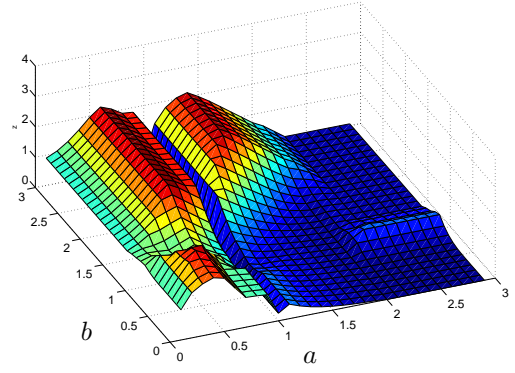}
\caption{The gap between the upper bounds and the lower bounds for the symmetric rate in an IC-FDR with $h_{d}=h_{r}=1$ and $P=30dB$ as a function of $a$ and $b$.}
\label{Gap_3d}
\end{figure}

\section{Conclusion}
\label{Section:Conclusion}
We have studied the interference channel with a full-duplex relay (IC-FDR). We derived two new upper bounds for this setup. These bounds improve previously known bounds for the IC-FDR. Furthermore, we studied the achievable rate in this setup. We derived an achievable rate region. Based on this rate region, the achievable symmetric rate in the symmetric IC-FDR is given. We showed that this achievable symmetric rate is within a finite gap to our upper bounds when the IC-FDR has strong source-relay links. Moreover, we showed that if the relay-destination links are weak, then the given upper bounds can be achieved within a constant gap by simply ignoring the relay.

\section{Acknowledgment}
The authors would like to show their appreciation to Deniz G\"und\"uz for fruitful discussions.

\bibliography{myBib}		

\end{document}